\title[Codes from algebraic functions fields]{A new construction of nonlinear codes via algebraic function fields}
\author{Shu Liu}\address{Natl Key Lab Sci and Technol Commun, University of Electronic Science and Technology of China, Chengdu, China.} \email{shuliu@uestc.edu.cn}
\author{Liming Ma}\address{School of Mathematical Sciences, University of Science and Technology of China, Hefei 230026, China
}\email{lmma20@ustc.edu.cn}
\author{ Ting-Yi~Wu}\address{Theory Lab, Central Research Institute, 2012 Labs, Huawei Technologies Co. Ltd.,
Hong Kong SAR, China
}\email{wu.ting.yi@huawei.com}
\author{Chaoping Xing} \address{School of Electronics, Information and Electric Engineering, Shanghai Jiao Tong University,
China 200240}\email{xingcp@sjtu.edu.cn}
\date{}
\newtheorem{lemma}{Lemma}[section]
\newtheorem{theorem}[lemma]{Theorem}
\newtheorem{cor}[lemma]{Corollary}
\newtheorem{prop}[lemma]{Proposition}
\newtheorem{ex}[lemma]{Example}
\theoremstyle{remark}
\newtheorem{rmk}{Remark}
\renewcommand{\epsilon}{\varepsilon}
\renewcommand{\le}{\leqslant}
\renewcommand{\ge}{\geqslant}
\newcommand{\vnote}[1]{}
\def\ZZ{\mathbb{Z}}
\def\PP{\mathbb{P}}
\def\F{\mathbb{F}}
\def\Z{\mathbb{Z}}
\def \mL {\mathcal{L}}
\def \Xi {{X^{[i]}}}
\newcommand{\Ga}{\alpha}
\def \bx {{\bf x}}
\def \by {{\bf y}}
\def \bu {{\bf u}}
\def \bv {{\bf v}}
\def\supp {{\rm supp }}
\begin{document}
\maketitle

\begin{abstract}
In coding theory, constructing codes with good parameters is one of the most important and fundamental problems.
Though a great many of good codes have been produced, most of them are defined over alphabets of sizes equal to prime powers.
In this paper, we provide a new explicit construction of $(q+1)$-ary nonlinear codes via algebraic function fields, where $q$ is a prime power.
Our codes are constructed by evaluations of rational functions at all rational places of the algebraic function field.
Compared with algebraic geometry codes, the main difference is that we allow rational functions to be evaluated at pole places.
After evaluating rational functions from a union of Riemann-Roch spaces, we obtain a family of nonlinear codes over the alphabet $\mathbb{F}_{q}\cup \{\infty\}$.
It turns out that our codes have better parameters than those obtained from MDS codes or good algebraic geometry codes via code alphabet extension and restriction.
\end{abstract}

\section{Introduction}

In coding theory, constructing codes with good parameters is one of the most important and fundamental problems.
For a $q$-ary code of length $n$, size $M$ and minimum distance $d$, we denote it by an $(n,M,d)$-code. The size is a measure of its efficiency and the minimum distance represents its error-correcting capability. Hence, people hope that both the size $M$ and minimum distance $d$ are as large as possible.
However, there is a trade-off between the size and the minimum distance of the code.
One of the well-known upper bounds is the Singleton bound which says that $M\le q^{n-d+1}$. A linear code achieving this bound is called a maximum distance separable (MDS) code.

Many efforts have been devoted to various constructions of good codes. Linear codes have received a lot of attention, such as Reed-Solomon codes, BCH codes, cyclic codes and so on, since they have good algebraic structures and many practical advantages.
However, there are some examples showing that linear codes do not exist for some parameters that nonlinear codes can have.
For example, there are no binary linear codes with parameters $[16,8,6]$. On the other hand, the Nordstorm-Robinson code \cite{LX04} is a binary nonlinear code with parameters $(16,2^8,6)$.
Furthermore, the Nordstrom-Robinson code can be viewed as an image under the Gray map of some algebraic geometry code over $\ZZ/4\ZZ$ in \cite{W97}.
Therefore, it is also of interest to provide explicit constructions of nonlinear codes.
Though a large number of nonlinear codes have been constructed, most of them are $q$-ary codes where $q$ is a prime power. 
Less is known for constructions of $q$-ary codes, where $q$ is not a prime power. Some nonlinear codes over  $\ZZ_4$, $\ZZ_6$, $\ZZ_{10}$ or $\ZZ_{12}$ were given with certain properties \cite{GH05,H00, HM11}.

In \cite{JMX21}, an explicit construction of $(q+1)$-ary $(q+1, q^{2m+1}+q^{2m}-2q^m+2, q+1-2m)$ nonlinear codes with $q$ being a prime power was presented. Such codes have better parameters than those obtained from MDS codes via code alphabet restriction and extension. Another advantage of these codes is that they can be efficiently decoded. Due to rich structures of algebraic function fields over finite fields, various techniques have been employed to construct good codes from algebraic function fields  \cite{BTV17, JMX20,LMX19, MX20, MX21, X11}.

In this paper, we generalizes the construction of nonlinear codes via rational function fields given in \cite{JMX21} to algebraic function fields. Our nonlinear codes are constructed by evaluations of rational functions at all rational places of algebraic function fields.
Compared with algebraic geometry codes, the main difference is that we allow rational functions to be evaluated at pole places.
After evaluating rational functions from a union of Riemann-Roch spaces, we construct a family of good nonlinear codes over the alphabet $\mathbb{F}_{q}\cup \{\infty\}$. Note that code sizes in  \cite{JMX21} are exactly calculated due to the nature of rational function fields, while  lower bounds on code sizes in this paper are provided.

This paper is organized as follows. In Section \ref{sec:2}, we introduce the basic facts on algebraic function fields, Riemann-Roch spaces, Zeta functions, codes and algebraic geometry codes. In Section \ref{sec:3}, we give an explicit construction of $(q+1)$-ary nonlinear codes from algebraic function fields over the finite field $\F_q$. In particular, we focus on the constructions of nonlinear codes via elliptic curves in Section \ref{sec:4} and maximal function fields in Section \ref{sec:5}, respectively.

\section{Preliminaries}\label{sec:2}
In this section, we present preliminaries on the definitions of algebraic function fields, Riemann-Roch spaces, Zeta functions, Codes and algebraic geometry codes.

\subsection{Algebraic function fields}
Let $q$ be a prime power, let $\F_q$ be the finite field with $q$ elements and let $F/\F_q$ be an algebraic function field with the full constant field $\F_q$.
The set of all places of $F$ is denoted by $\PP_F$.
Let $P\in \PP_F$ be a place of $F$ and let $\mathcal{O}_P$ be its corresponding valuation ring.
The degree of $P$ is defined as the degree of field extension $[\mathcal{O}_P/P:\F_q]$.
A place of $F/\F_q$ with degree one is called rational.
For any rational place $P$ and $f\in \mathcal{O}_P$,  we define $f(P)\in \mathcal{O}_P/P\cong \F_q$ to be the residue class of $f$ modulo $P$; otherwise $f(P)=\infty$ for any $f\in F\setminus \mathcal{O}_P$.

A divisor $G$ of $F$ is a formal sum $G=\sum_{P\in \PP_F} n_PP$ with only finitely many nonzero coefficients $n_P\in \mathbb{Z}$.
The support of $G$ is defined as $\supp(G)=\{P\in \PP_F: n_P\neq 0\}$.
If all coefficients of $G$ are non-negative, then the divisor $G$ is called effective.
Let $\nu_P$ be the normalized discrete valuation of $P$. For any nonzero element $f\in F$, the zero divisor of $f$ is defined by $(f)_0=\sum_{P\in \PP_F, \nu_{P}(f)>0} \nu_P(f)P,$
and the pole divisor of $f$ is defined by $(f)_\infty=\sum_{P\in \PP_F, \nu_P(f)<0} -\nu_P(f)P.$
The principal divisor of $f$ is given by $$(f):=(f)_0-(f)_\infty=\sum_{P\in \PP_F} \nu_P(f)P.$$
For two divisors $G=\sum_{P\in \PP_F} n_PP$ and $D=\sum_{P\in \PP_F} m_PP$, we define the union and intersection of $G$ and $D$ respectively as follows
\[G\vee D:=\sum_{P\in\PP_F} \max\{n_P,m_P\}P,\qquad G\wedge D:=\sum_{P\in\PP_F} \min\{n_P, m_P\}P.\]
It is clear that $G\wedge D +  G\vee D= G+D.$

\subsection{Riemann-Roch spaces}
Let $F/\F_q$ be an algebraic function field with genus $g$.
For a divisor $G$ of $F/\F_q$, the Riemann-Roch space of $G$ is defined by
\[\mL(G):=\{u\in F^*:\; (u)+G\ge 0\}\cup\{0\}.\]
From the Riemann-Roch theorem \cite[Theorem 1.5.17]{St09}, $\mL(G)$ is a $\F_q$-vector space of dimension $\ell(G)\ge \deg(G)-g+1$. Moreover, the equality holds true if $\deg(G)\ge 2g-1$.
For any two divisors $G$ and $H$, it is straightforward to verify that
\begin{equation*}
\mL(G) \cap \mL(H) = \mL(G\wedge H) \ \makebox{and} \ \mL(G)+\mL(H) \subseteq \mL(G\vee H).
\end{equation*}

\begin{lemma}\label{lem:2.1}
Let $f_1,f_2$ be two nonzero functions in $F$ with pole divisors $(f_i)_\infty=G_i$ for $i=1,2$.
If $f_1(P)=f_2(P)\in \mathbb{F}_q\cup \{\infty\}$ for some rational place $P\in  \mathbb{P}_F$, then we have $f_1-f_2\in\mL(G_1+ G_2-P)$.
\end{lemma}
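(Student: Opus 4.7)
The plan is to show that the divisor $D := (f_1 - f_2) + G_1 + G_2 - P$ is effective, since by definition this is equivalent to $f_1 - f_2 \in \mL(G_1 + G_2 - P)$. For every place $Q$ of $F$ I must verify that $\nu_Q(f_1-f_2)$ plus the coefficient of $Q$ in $G_1+G_2-P$ is non-negative. I would split the verification into places $Q\ne P$ and the distinguished place $P$.

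For any place $Q \ne P$, the strict triangle inequality gives $\nu_Q(f_1-f_2) \ge \min(\nu_Q(f_1), \nu_Q(f_2))$, while the coefficient of $Q$ in $G_i = (f_i)_\infty$ equals $\max(0, -\nu_Q(f_i))$. A short case analysis on the signs of $\nu_Q(f_1), \nu_Q(f_2)$ (both non-negative; one of each sign; both negative) shows that $\nu_Q(f_1-f_2) + \max(0,-\nu_Q(f_1)) + \max(0,-\nu_Q(f_2)) \ge 0$ in every case. Conceptually, this is exactly the inclusion $\mL(G_1)+\mL(G_2) \subseteq \mL(G_1\vee G_2) \subseteq \mL(G_1+G_2)$ noted before the lemma, and it dispatches every place except possibly $P$.

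For the place $P$ itself, I would split according to whether $f_1(P)=f_2(P)$ lies in $\F_q$ or equals $\infty$. In the finite case, both $f_i \in \mathcal{O}_P$, so $P\notin \supp(G_1)\cup \supp(G_2)$ and the coefficient of $P$ in $G_1+G_2-P$ is $-1$; the hypothesis $(f_1-f_2)(P)=0$ then gives $\nu_P(f_1-f_2) \ge 1$, which is exactly what is needed. In the infinite case, set $v_i := -\nu_P(f_i) \ge 1$, so that the coefficient of $P$ in $G_1+G_2-P$ equals $v_1+v_2-1$; the strict triangle inequality yields $\nu_P(f_1-f_2) \ge -\max(v_1,v_2)$, whence
\[
\nu_P(f_1-f_2) + v_1 + v_2 - 1 \;\ge\; \min(v_1,v_2) - 1 \;\ge\; 0.
\]

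The only mildly subtle point is the subcase $v_1=v_2$ at a common pole, where the strict triangle inequality gives only the weak bound $\nu_P(f_1-f_2) \ge -v$ and there is no \emph{a priori} cancellation of leading terms; however, the single subtracted $P$ in $G_1+G_2-P$ supplies exactly the cushion required, because $\min(v_1,v_2)\ge 1$. Combining the verifications at $Q\ne P$ with either of the two cases at $P$ shows that $D\ge 0$, completing the proof.
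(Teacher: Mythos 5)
Your proof is correct. It takes a slightly different route from the paper's: you verify effectivity of the divisor $(f_1-f_2)+G_1+G_2-P$ place by place, using only the ultrametric inequality $\nu_Q(f_1-f_2)\ge\min(\nu_Q(f_1),\nu_Q(f_2))$ at every place $Q$, plus the one extra unit of vanishing or pole-order slack at $P$ supplied by the hypothesis. The paper instead works at the level of Riemann--Roch spaces: it uses $\mL(G_1)+\mL(G_2)\subseteq\mL(G_1\vee G_2)$ to get $f_1-f_2\in\mL(G_1\vee G_2)$ once and for all, and then in the finite case upgrades this to $\mL(G_1\vee G_2-P)\subseteq\mL(G_1+G_2-P)$ because $P$ is a zero of $f_1-f_2$, while in the pole case it uses the identity $G_1\vee G_2=G_1+G_2-G_1\wedge G_2$ together with $P\in\supp(G_1\wedge G_2)$ to conclude $G_1\vee G_2\le G_1+G_2-P$ directly. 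Your version is the more elementary unwinding of exactly what the $\vee/\wedge$ bookkeeping encodes; the paper's version is shorter given that it has already set up the lattice notation for divisors. Both show that the bound $G_1+G_2-P$ is quite wasteful away from $P$ (where $G_1\vee G_2$ already suffices), which is fine since all that matters downstream is the degree bound. One terminological nit: the inequality $\nu_Q(f_1-f_2)\ge\min(\nu_Q(f_1),\nu_Q(f_2))$ is usually called the \emph{strong} (or ultrametric) triangle inequality, not the \emph{strict} triangle inequality; ``strict'' ordinarily signals a proper inequality, which is not what you are using (indeed the $v_1=v_2$ subcase you flag is precisely where equality may fail to be strict).
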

\begin{proof}
{\bf Case 1: } If $f_1(P)=f_2(P)\in\F_q$, then
we have $(f_1-f_2)(P)=f_1(P)-f_2(P)=0$, i.e., $P$ is a zero of $f_1-f_2$. Let $G=(f_1-f_2)_\infty$. Thus, we have $f_1-f_2\in\mL(G-P)$.
Since $f_1-f_2\in \mL(G_1)+\mL(G_2)\subseteq \mL(G_1\vee G_2)$, it follows that  $f_1-f_2\in\mL(G_1\vee G_2-P)\subseteq \mL(G_1+G_2-P)$.

{\bf Case 2: } If $f_1(P)=f_2(P)=\infty$, then we have $P \in\supp(G_1)\cap\supp( G_2)=\supp(G_1\wedge G_2)$.
From the equation $G_1\vee G_2=G_1+G_2-G_1\wedge G_2$, we have $G_1\vee G_2\le G_1+G_2-P$. Since $f_1-f_2\in \mL(G_1)+\mL(G_2)\subseteq \mL(G_1\vee G_2)$, it follows that  $f_1-f_2\in\mL(G_1+ G_2-P)$.
\end{proof}

\subsection{Zeta functions}
Let $F/\F_q$ be an algebraic function field with genus $g$. Let $A_i$ be the number of all effective divisors of $F/\F_q$ of degree $i\ge 0$.
The Zeta function of $F/\F_q$ is defined as the power series
$Z(t):=\sum_{i=0}^\infty A_it^i\in \mathbb{C}[[t]].$
From \cite[Theorem 5.1.15]{St09}, the Zeta function $Z(t)$ can be written as a rational function
$$Z(t)=\frac{L(t)}{(1-t)(1-qt)},$$
where $L(t)=\sum_{i=0}^{2g} a_it^i\in \mathbb{Z}[t]$ is a polynomial of degree $2g$. The polynomial $L(t)$ is called the $L$-polynomial of $F/\F_q$.

\begin{lemma}\label{lem:2.2}
Let $F/\F_q$ be an algebraic function field with genus $g$. Let $A_i$ be the number of all effective divisors of degree $i$.
Let $a_j$ be the coefficients of $L$-polynomial $L(t)=\sum_{j=0}^{2g} a_jt^j$. 
Then we have $$A_i=\sum_{j=0}^{\min\{i,2g\}} \frac{q^{i+1-j}-1}{q-1} a_j.$$
\end{lemma}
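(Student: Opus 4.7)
The strategy is to simply compare coefficients in the identity
\[
\sum_{i=0}^\infty A_i t^i \;=\; Z(t) \;=\; \frac{L(t)}{(1-t)(1-qt)} \;=\; \frac{\sum_{j=0}^{2g} a_j t^j}{(1-t)(1-qt)}.
\]
First, I would expand $\frac{1}{(1-t)(1-qt)}$ as a power series via partial fractions. Writing
\[
\frac{1}{(1-t)(1-qt)} \;=\; \frac{1}{q-1}\left(\frac{q}{1-qt}-\frac{1}{1-t}\right),
\]
and using the geometric series expansions of $\frac{1}{1-qt}$ and $\frac{1}{1-t}$, one obtains
\[
\frac{1}{(1-t)(1-qt)} \;=\; \sum_{k=0}^\infty \frac{q^{k+1}-1}{q-1}\, t^k.
\]

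Next, I would multiply this power series by the polynomial $L(t)=\sum_{j=0}^{2g}a_j t^j$. The coefficient of $t^i$ in the product is the convolution
\[
\sum_{j=0}^{\min\{i,2g\}} a_j \cdot \frac{q^{(i-j)+1}-1}{q-1},
\]
where the upper limit is $\min\{i,2g\}$ because $a_j=0$ for $j>2g$ and the factor $\frac{q^{k+1}-1}{q-1}$ is only defined for $k\ge 0$, i.e.\ $j\le i$. Equating this with $A_i$ from the power-series expansion of $Z(t)$ gives the stated identity.

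\textbf{Main obstacle.} There is no real obstacle here: the proof is a direct coefficient-comparison argument once the partial-fraction expansion of $(1-t)^{-1}(1-qt)^{-1}$ is written down. The only minor point that needs a line of justification is the truncation of the summation index at $\min\{i,2g\}$, which follows from $\deg L(t)=2g$ and the vanishing of negative powers of $t$.
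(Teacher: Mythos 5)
Your proof is correct and takes essentially the same approach as the paper: expand $Z(t)=L(t)/[(1-t)(1-qt)]$ as a power series and compare coefficients of $t^i$. The only (immaterial) difference is that the paper obtains the expansion $\frac{1}{(1-t)(1-qt)}=\sum_{k\ge 0}\frac{q^{k+1}-1}{q-1}t^k$ by multiplying the two geometric series $\sum_k t^k$ and $\sum_u q^u t^u$ directly, whereas you derive the same identity via partial fractions.
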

\begin{proof}
This result follows from the equation
\begin{align*}
\sum_{i=0}^\infty A_i t^i=Z(t)=\frac{L(t)}{(1-t)(1-qt)}&=\left(\sum_{j=0}^{2g} a_jt^j\right) \left(\sum_{k=0}^\infty t^k\right) \left(\sum_{u=0}^\infty q^u t^u\right)\\
&= \left(\sum_{j=0}^{2g} a_jt^j\right) \left(\sum_{k=0}^\infty \frac{q^{k+1}-1}{q-1}t^k\right).
\end{align*}
\end{proof}

\subsection{Codes}
Let $\F_q$ be the finite field with $q$ elements.
We denote a $q$-ary $(n,M,d)$ code as a code of length $n$, size $M$ and minimum distance $d$. 
There is a well-known upper bound on the size of codes which is called the Singleton bound \cite[Theorem 5.4.1]{LX04}.
\begin{lemma}\label{lem:2.3}
For any integer $q>1$, any positive integer $n$ and any integer $d$ with $1\le d\le n$, let $C$ be a $q$-ary $(n,M,d)$-code. Then we have
$M\le q^{n-d+1}.$
\end{lemma}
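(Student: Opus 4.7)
The plan is to prove the Singleton bound by a standard projection/pigeonhole argument. Let $C\subseteq A^n$ be a $q$-ary $(n,M,d)$-code, where $A$ is the code alphabet with $|A|=q$. The key observation is that any two distinct codewords are forced to disagree in at least $d$ positions by the minimum distance assumption, so they can agree on at most $n-d$ positions.

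First, I would define the projection map $\pi\colon C \to A^{n-d+1}$ onto any fixed set of $n-d+1$ coordinates (say the first $n-d+1$). I would then show that $\pi$ is injective: if $\pi(\mathbf{c})=\pi(\mathbf{c}')$ for two codewords $\mathbf{c},\mathbf{c}'\in C$, then $\mathbf{c}$ and $\mathbf{c}'$ agree in at least $n-d+1$ coordinates, so they differ in at most $n-(n-d+1)=d-1$ coordinates. This contradicts the minimum distance being $d$ unless $\mathbf{c}=\mathbf{c}'$.

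From injectivity, I would conclude $M=|C|=|\pi(C)|\le |A^{n-d+1}|=q^{n-d+1}$, which is the claimed bound.

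There is no real obstacle here; the only subtlety to be careful about is that the alphabet is arbitrary of size $q$ (not necessarily a finite field), so the argument must be purely combinatorial. The proof uses only the definitions of length, size, and minimum distance, and the trivial bound $|A^k|=q^k$, so it goes through without any structural assumption on $A$.
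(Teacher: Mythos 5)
Your proof is correct and is precisely the standard projection/pigeonhole argument for the Singleton bound; the paper itself gives no proof but simply cites Theorem 5.4.1 of Ling--Xing, where this same argument appears. Nothing further to compare.
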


A linear code of length $n$ over $\F_q$ is a subspace of $\F_q^n$.
A linear code with length $n$, dimension $k$ and minimum distance $d$ is denoted as an $[n,k,d]$-linear code.
Any linear code achieving the Singleton bound, i.e., $k+d=n+1$, is called a maximum distance separable (MDS) code.

Denote by  $\Sigma$ the set $\F_q\cup \{\infty\}$. The size of $\Sigma$ is $|\Sigma|=q+1$. In this paper, we consider nonlinear codes over the alphabet $\Sigma$.
Let $\bx, \by$ be words of length $n$ over $\Sigma$. The Hamming distance of $\bx$ and $\by$, denoted by $d(\bx,\by)$, is defined to be the number of places at which $\bx$ and $\by$ differ. The minimum distance of $C$ is defined by
$d(C)=\min\{d(\bx,\by): \bx,\by\in C, \bx\neq \by\}.$
From Lemma \ref{lem:2.3}, any $(n, M, d)$-code over $\Sigma$ satisfies $M\le (q+1)^{n-d+1}$.
In order to obtain good lower bound on the size of code over $\Sigma$, one could make use of the following propagation rules given in Exercises of \cite[Chapter 6]{LX04}.

\begin{lemma}\label{lem:2.4}
\begin{itemize}
\item[(1)] (Alphabet extension) Let $s,r$ be two integers such that $s\ge r> 1$. We embed an alphabet $A$ of cardinality of $r$ into an alphabet $B$ of cardinality $s$.
Then any $(n,M,d)$-code $C$ over $A$ can be viewed as an $(n,M,d)$-code over $B$.

\item[(2)] (Alphabet restriction) Let $s,r$ be two integers such that $s\ge r> 1$. We embed an alphabet $A$ of cardinality of $r$ into $\mathbb{Z}_s$.
For an $(n,M,d)$-code $C$ over $\mathbb{Z}_s$, there exists an $r$-ary $(n,M^\prime, d^\prime)$-code with $M^\prime\ge M(r/s)^n$ and $d^\prime \ge d$.

\item[(3)] (Alphabet multiplication) Let $r$ and $s$ be two integers bigger than $1$. Let $C_1$ be an $(n,M_1,d_1)$-code over $\ZZ_r$, and let $C_2$ be an $(n,M_2,d_2)$-code over $\ZZ_s$. Then $C_1$ and $C_2$ can be viewed as codes over $\ZZ_{rs}$ by mapping $i(\text{mod } r)\in \ZZ_r$ and $i(\text{mod } s)\in \ZZ_s$ to $i(\text{mod } rs)\in \ZZ_{rs}$.
Furthermore, the code $$C_1+rC_2:=\{\bu+r\bv\in \ZZ_{rs}^n: \bu\in C_1,\bv\in C_2\}$$
is an $(n,M_1M_2,\min\{d_1,d_2\})$-code over $\ZZ_{rs}$.
\end{itemize}
\end{lemma}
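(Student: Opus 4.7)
\medskip

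\noindent\textbf{Proof plan for Lemma \ref{lem:2.4}.}

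The plan is to handle each of the three propagation rules in turn; none of them requires deep machinery, so the outline focuses on the structural idea behind each part rather than heavy computation.

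For part (1), the argument is essentially a relabeling. Fix any injection $\iota\colon A\hookrightarrow B$ and apply it coordinate-wise to obtain $\iota^{\otimes n}(C)\subseteq B^n$. Since $\iota$ is injective, $|\iota^{\otimes n}(C)|=M$, and for any two codewords $\bx,\by$ the Hamming distance is unchanged because $\iota(a)=\iota(a')$ iff $a=a'$. So the image is an $(n,M,d)$-code over $B$ and there is nothing more to verify.

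For part (2), the natural approach is an averaging (pigeonhole) argument over translations. Viewing $C\subseteq\ZZ_s^n$, consider, for each shift $\bv\in\ZZ_s^n$, the translate $C+\bv$ and the subset $(C+\bv)\cap A^n$ of codewords whose translate lies in $A^n$. Summing over $\bv$, the expected number of translates of a fixed codeword that land in $A^n$ equals $|A|^n=r^n$, so
\[\sum_{\bv\in\ZZ_s^n}\bigl|(C+\bv)\cap A^n\bigr|=M\cdot r^n.\]
By pigeonhole, there exists $\bv_0$ with $|(C+\bv_0)\cap A^n|\ge M(r/s)^n$. Translation is an isometry, so the minimum distance of $(C+\bv_0)\cap A^n$ is at least $d$; relabeling $A$ back to $\{0,1,\dots,r-1\}$ yields the desired $r$-ary code.

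For part (3), the key observation is that the map $\phi\colon\ZZ_r\times\ZZ_s\to\ZZ_{rs}$ defined by $\phi(u,v)=u+rv$ is a bijection of sets, because as $u$ ranges over $\{0,\dots,r-1\}$ and $v$ over $\{0,\dots,s-1\}$ the sums $u+rv$ exhaust $\{0,\dots,rs-1\}$ exactly once. Applying $\phi$ coordinate-wise shows that the assignment $(\bu,\bv)\mapsto \bu+r\bv$ is injective on $C_1\times C_2$, giving size $M_1M_2$. For the distance, a coordinate $i$ contributes to the Hamming weight of $(\bu_1+r\bv_1)-(\bu_2+r\bv_2)$ exactly when $(u_{1,i},v_{1,i})\ne(u_{2,i},v_{2,i})$, i.e.\ when $u_{1,i}\ne u_{2,i}$ or $v_{1,i}\ne v_{2,i}$. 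If $\bu_1=\bu_2$ then $\bv_1\ne\bv_2$, giving at least $d_2$ differing positions; symmetrically if $\bv_1=\bv_2$; and if both differ the distance is at least $\max\{d_1,d_2\}$. In every case the distance is at least $\min\{d_1,d_2\}$.

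The only part with any real content is (2); the main (mild) obstacle there is recognizing that the averaging is over $\ZZ_s^n$ and that translation preserves Hamming distance, after which the pigeonhole step is immediate. Parts (1) and (3) reduce to verifying injectivity and distance preservation for very concrete maps.
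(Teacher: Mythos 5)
The paper does not supply its own proof of Lemma~\ref{lem:2.4}; it cites the exercises in \cite[Chapter~6]{LX04}. Your argument is correct for all three parts and follows the standard route: a coordinate-wise injection for (1), an averaging/pigeonhole over all translates $\bv\in\ZZ_s^n$ combined with the fact that translation is an isometry for (2), and the bijection $(u,v)\mapsto u+rv$ from $\{0,\dots,r-1\}\times\{0,\dots,s-1\}$ onto $\{0,\dots,rs-1\}$ applied coordinate-wise for (3), with the distance bound following because a coordinate of $\bu_1+r\bv_1$ and $\bu_2+r\bv_2$ differs precisely when the corresponding coordinates of $\bu_1,\bu_2$ differ or those of $\bv_1,\bv_2$ differ. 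This is exactly the intended proof; nothing to add.
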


\subsection{Algebraic geometry codes}
 Let $F/\F_q$ be an algebraic function field of genus $g$ with $N(F)$ rational places. Let $P_1,P_2,\cdots,P_n$ be rational points of $F$ and $D=\sum_{i=1}^nP_i$. For every divisor $G$ with $0< \deg(G)<n$ and $P_i\notin \supp(G)$, the algebraic geometry code $C(D,G)$ is defined as the image of evaluation map
$$\phi: \mL(G)\rightarrow \F_q^n,\  \phi(f)=(f(P_1),f(P_2),\cdots,f(P_n)).$$
From the Riemann-Roch Theorem, the dimension of $C(D,G)$ is $k=\ell(G)\ge \deg(G)-g+1$ and the minimum distance of $C(D,G)$ is lower bounded by $d\ge n-\deg(G)$.
It is easy to see that $n-g+1\le k+d\le n+1$. The following lemma gives an upper bound on the number of rational places of algebraic function fields over $\F_q$ from \cite[Theorem 5.2.3]{St09}.

\begin{lemma}\label{lem:2.6}
Let $F/\F_q$ be an algebraic function field of genus $g$ defined over the finite field $\F_q$ and let $N(F)$ be its number of rational places. Then we have
$$|N(F)-q-1|\le 2g\sqrt{q}.$$
\end{lemma}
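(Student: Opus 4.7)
The plan is to derive this bound from the analytic properties of the Zeta function $Z(t) = L(t)/((1-t)(1-qt))$ already established in Section~\ref{sec:2}. Working over $\mathbb{C}$, I would factor the $L$-polynomial as $L(t) = \prod_{i=1}^{2g}(1-\alpha_i t)$, so that all information about rational place counts is encoded in the reciprocal roots $\alpha_1,\ldots,\alpha_{2g}$.

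Next I would extract $N(F)$ from the logarithmic derivative of $Z(t)$. If $N_m$ denotes the number of rational places of the constant field extension $F\cdot \F_{q^m}$, then a standard computation with effective divisors gives $t\,Z'(t)/Z(t) = \sum_{m\ge 1} N_m t^m$. Computing the same logarithmic derivative directly from the factored form of $Z(t)$ and comparing coefficients yields the explicit trace formula
\[ N_m = q^m + 1 - \sum_{i=1}^{2g} \alpha_i^m \qquad \text{for all } m\ge 1. \]
Taking $m=1$ gives $N(F) = q + 1 - \sum_{i=1}^{2g}\alpha_i$.

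The crux is then the Hasse--Weil Riemann hypothesis, which asserts that every reciprocal root satisfies $|\alpha_i| = \sqrt{q}$. Granted this, the triangle inequality immediately delivers
\[ |N(F) - q - 1| \;=\; \Bigl|\sum_{i=1}^{2g} \alpha_i \Bigr| \;\le\; \sum_{i=1}^{2g} |\alpha_i| \;=\; 2g\sqrt{q}, \]
which is the stated bound.

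The main obstacle is of course establishing $|\alpha_i| = \sqrt{q}$. One classical route is Weil's original proof, which realizes the $\alpha_i$ as eigenvalues of Frobenius acting on the Jacobian and derives the modulus from positivity of the Rosati involution; another is Bombieri's elementary reworking of Stepanov's polynomial method, which gives two-sided effective bounds on $N_m$ for large $m$ and then deduces $|\alpha_i| = \sqrt{q}$ via an asymptotic argument on the trace formula above. Either input is far heavier than anything else in Section~\ref{sec:2}, which is why the authors cite this bound directly from \cite{St09} rather than reprove it in the text.
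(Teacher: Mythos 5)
Your outline is correct, and it is the standard route that the cited reference \cite{St09} takes: factor $L(t)=\prod_{i=1}^{2g}(1-\alpha_i t)$, derive the trace formula $N(F)=q+1-\sum_i\alpha_i$ from the logarithmic derivative of $Z(t)$, invoke $|\alpha_i|=\sqrt{q}$ (the Riemann hypothesis for curves, proved in \cite{St09} by the Stepanov--Bombieri method), and finish with the triangle inequality. The paper itself offers no proof --- it simply cites \cite[Theorem 5.2.3]{St09} --- so your reduction mirrors the source, with both you and the paper deferring the one genuinely hard step, $|\alpha_i|=\sqrt{q}$, to external machinery.
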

The above bound given in Lemma \ref{lem:2.6} is called the Hasse-Weil bound. Any function field $F/\F_q$ of genus $g$ achieving the Hasse-Weil upper bound $q+1+2g\sqrt{q}$ is called maximal.
In order to construct good algebraic geometry codes, people need to use algebraic function fields with many rational places, especially maximal function fields \cite{BMXY13,BM18,GSX00,GK09,MX19}.

In particular, if $F/\F_q$ is an elliptic function field, then the elliptic code $C(D,G)$ is an $[n,k,d]$-linear code with $n\le k+d\le n+1$.
Hence, the elliptic code is an almost MDS code, i.e., $k+d=n$, or an MDS code.
Furthermore, the following result can be found from \cite[Proposition 3.4]{M93}.

\begin{lemma}\label{lem:2.5}
If a nontrivial elliptic MDS code has length $n>q+1$, then it is a $[6,3]$ code over $\F_4$ arising from a curve with $9$ rational points.
\end{lemma}

Let $N_q(g)$ be the maximum number of rational places of global function fields $F/\F_q$ of genus $g$.
A prime power $q=p^a$ is called exceptional if $a\ge 3$ is odd and $p$ divides $\lfloor 2\sqrt{q}\rfloor$.
From \cite[Corollary 9.94]{HKT08}, one has the following result.

\begin{lemma}\label{lem:2.7}
The value $N_q(1)$ can be determined explicitly as follows:
$$N_q(1)=\begin{cases} q+\lfloor 2\sqrt{q}\rfloor, & \text{ if } q \text{ is exceptional}, \\ q+1+\lfloor 2\sqrt{q}\rfloor, & \text{ otherwise}.\end{cases}$$
\end{lemma}

\section{A new construction of nonlinear codes}\label{sec:3}
Let $q$ be a prime power. Let $\F_q=\{\Ga_1, \Ga_2, \cdots, \Ga_q\}$ be the finite field with $q$ elements.
Denote by  $\Sigma$ the set $\F_q\cup \{\infty\}$. The size of $\Sigma$ is $|\Sigma|=q+1$.
In this section, we will propose a construction of $(q+1)$-ary nonlinear codes over the code alphabet $\Sigma$ via algebraic function fields by generalizing the ideas given in \cite{JMX21} and \cite{SX05}.

\begin{prop}\label{prop:3.1}
Let $F/\F_q$ be an algebraic function field with genus $g$ and $D$ be a divisor of $F$ with $\deg(D)=m\ge 2g-1$.
Let $Q_1,Q_2,\cdots,Q_t$ be distinct places of $F$ with $\deg(Q_i)=r_i$. 
Let $G=\sum_{i=1}^t m_iQ_i$  be a divisor of $F$ with $\deg(G)=\sum_{i=1}^t m_ir_i=s$ and $m_i\ge 1$ for $1\le i\le t$. Consider the set $$\mL_D(G)=\{f\in \mL(D+G)| \nu_{Q_i}(f)=-m_i-\nu_{Q_i}(D) \text{ for all } 1\le i\le t\}.$$
Then the cardinality of $\mL_D(G)$ is
$$|\mL_D(G)|=q^{m+s-g+1} \prod_{i=1}^t\left(1-\frac{1}{q^{r_i}}\right)\ge q^{m-g+1} (q-1)^s.$$
\end{prop}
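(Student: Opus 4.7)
The plan is to realize $\mL_D(G)$ as the complement, inside $\mL(D+G)$, of a union of smaller Riemann--Roch spaces, then apply inclusion--exclusion and read the sizes off from the Riemann--Roch theorem. Any $f\in \mL(D+G)$ must satisfy $\nu_{Q_i}(f)\ge -\nu_{Q_i}(D)-m_i$ at every $Q_i$, and the condition $\nu_{Q_i}(f)=-\nu_{Q_i}(D)-m_i$ fails precisely when $\nu_{Q_i}(f)\ge -\nu_{Q_i}(D)-m_i+1$, i.e.\ $f\in \mL(D+G-Q_i)$. Hence
\[
\mL_D(G)=\mL(D+G)\setminus\bigcup_{i=1}^t \mL(D+G-Q_i).
\]

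By the inclusion--exclusion principle together with the identity $\bigcap_{i\in S}\mL(D+G-Q_i)=\mL\bigl(D+G-\sum_{i\in S}Q_i\bigr)$ (which follows from $\mL(G_1)\cap\mL(G_2)=\mL(G_1\wedge G_2)$ applied iteratively to the divisors $D+G-Q_i$), I would write
\[
|\mL_D(G)|=\sum_{S\subseteq\{1,\dots,t\}}(-1)^{|S|}\,\bigl|\mL\bigl(D+G-\textstyle\sum_{i\in S}Q_i\bigr)\bigr|.
\]
For each $S$, the divisor $D+G-\sum_{i\in S}Q_i$ has degree $m+s-\sum_{i\in S}r_i\ge m\ge 2g-1$, since removing one copy of each $Q_i$ with $i\in S$ still leaves an effective divisor added to $D$. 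Thus the Riemann--Roch theorem applies with equality, giving $\ell\bigl(D+G-\sum_{i\in S}Q_i\bigr)=m+s-\sum_{i\in S}r_i-g+1$ and hence $|\mL(\cdot)|=q^{m+s-\sum_{i\in S}r_i-g+1}$. Substituting and factoring $q^{m+s-g+1}$ out of the sum turns it into
\[
q^{m+s-g+1}\sum_{S\subseteq\{1,\dots,t\}}(-1)^{|S|}\prod_{i\in S}q^{-r_i}=q^{m+s-g+1}\prod_{i=1}^t\bigl(1-q^{-r_i}\bigr),
\]
which is the claimed equality.

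For the lower bound, I would rewrite the $i$-th factor as $q^{m_ir_i}(1-q^{-r_i})=q^{(m_i-1)r_i}(q^{r_i}-1)$, so that the total size becomes $q^{m-g+1}\prod_{i=1}^t q^{(m_i-1)r_i}(q^{r_i}-1)$. It then suffices to establish the elementary inequality $q^{(m_i-1)r_i}(q^{r_i}-1)\ge (q-1)^{m_ir_i}$ for every $i$; multiplying over $i$ yields $(q-1)^{\sum_i m_ir_i}=(q-1)^s$ and finishes the proof. This inequality reduces to $q^{r_i}-1\ge (q-1)^{r_i}$ when $m_i=1$, which is immediate from the binomial expansion $q^{r_i}=((q-1)+1)^{r_i}\ge (q-1)^{r_i}+1$, and it follows for $m_i\ge 2$ by combining this with $q^{(m_i-1)r_i}\ge (q-1)^{(m_i-1)r_i}$. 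The main technical step is the inclusion--exclusion argument; the rest is a direct manipulation.
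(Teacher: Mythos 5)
Your proposal is correct and follows essentially the same route as the paper: decompose $\mL_D(G)$ as $\mL(D+G)$ minus the union of the $\mL(D+G-Q_i)$, apply inclusion--exclusion with Riemann--Roch (your explicit check that $\deg(D+G-\sum_{i\in S}Q_i)\ge m\ge 2g-1$ is a detail the paper elides but is needed), and then bound the resulting product. The only cosmetic difference is in the lower bound: the paper uses $1-q^{-r_i}\ge(1-q^{-1})^{r_i}$ followed by $\sum r_i\le s$, while you multiply through by $q^{m_ir_i}$ and reduce to $q^{r_i}-1\ge(q-1)^{r_i}$; these are the same inequality in disguise.
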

\begin{proof}
From the definition of $\mL_D(G)$, it is clear that
$$\mL_D(G)=\mL(D+G)-\cup_{i=1}^t \mL(D+G-Q_i).$$
From the Riemann-Roch Theorem \cite[Theorem 1.5.14]{St09}, the size of  $\mL(D+G)$ is $$|\mL(D+G)|=q^{\deg(D+G)-g+1}=q^{m+s-g+1}.$$
From the inclusion-exclusion principle of combinatorics, the cardinality of $\cup_{i=1}^t \mL(D+G-Q_i)$ can be calculated explicitly as follows:
\begin{align*}
|\cup_{i=1}^t \mL(D+G-Q_i)| &= \sum_{k=1}^t(-1)^{k-1} \sum_{1\le i_1<i_2<\cdots < i_k\le t} |\cap_{j=1}^k \mL(D+G-Q_{i_j})|\\
&=\sum_{k=1}^t (-1)^{k-1} \sum_{1\le i_1<i_2<\cdots < i_k\le t}|\mL(D+G-\sum_{j=1}^k Q_{i_j})|\\
&=\sum_{k=1}^t (-1)^{k-1}  \sum_{1\le i_1<i_2<\cdots < i_k\le t} q^{m+s-g+1-\sum_{i=1}^k  r_{i_j}}\\
&= q^{m+s-g+1} \sum_{k=1}^t (-1)^{k-1}  \sum_{1\le i_1<i_2<\cdots < i_k\le t} q^{-\sum_{i=1}^k  r_{i_j}}\\
&= q^{m+s-g+1}  \left[1-\prod_{i=1}^t \left(1-\frac{1}{q^{r_i}}\right)\right].\\
\end{align*}
Hence, the cardinality of $\mL_D(G)$ is
\begin{align*}
|\mL_D(G)|&=q^{m+s-g+1}  \prod_{i=1}^t\left(1-\frac{1}{q^{r_i}}\right)\ge q^{m+s-g+1}  \prod_{i=1}^t\left(1-\frac{1}{q}\right)^{r_i}
\\&\ge q^{m+s-g+1}  \left(1-\frac{1}{q}\right)^{\sum_{i=1}^t m_ir_i}= q^{m+s-g+1} \left(1-\frac{1}{q}\right)^s=q^{m-g+1} (q-1)^s.
\end{align*}
\end{proof}

\begin{lemma}\label{lem:3.2}
Let $G_1$ and $G_2$ be two distinct positive divisors of $F$. Then we have $\mL_D(G_1)\cap \mL_D(G_2)=\emptyset.$
\end{lemma}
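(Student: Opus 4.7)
The plan is to show that for any $f \in \mL_D(G)$, the positive divisor $G$ is uniquely determined by $f$ together with the ambient divisor $D$. Once that is established, disjointness $\mL_D(G_1) \cap \mL_D(G_2) = \emptyset$ for $G_1 \ne G_2$ drops out immediately: any $f$ in the intersection would force both $G_1$ and $G_2$ to equal the same divisor reconstructed from $f$.

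To reconstruct $G = \sum_{i=1}^{t} m_i Q_i$ from $f$, I would use the two halves of the definition of $\mL_D(G)$. At each $Q_i \in \supp(G)$ the defining equality $\nu_{Q_i}(f) = -m_i - \nu_{Q_i}(D)$ together with $m_i \ge 1$ gives the strict inequality $\nu_{Q_i}(f) < -\nu_{Q_i}(D)$. At a place $P \notin \supp(G)$, the fact that $G$ is positive gives $\nu_P(G) = 0$, so the containment $f \in \mL(D+G)$ yields $\nu_P(f) \ge -\nu_P(D+G) = -\nu_P(D)$. These two observations let us read off both $\supp(G) = \{P \in \PP_F : \nu_P(f) + \nu_P(D) < 0\}$ and the coefficients $m_P = -\nu_P(f) - \nu_P(D)$ on this support, purely from $f$ and $D$.

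The only delicate point is the strict/weak dichotomy between places inside and outside $\supp(G)$; this is exactly where the hypotheses $m_i \ge 1$ and the effectivity of $G$ enter. Concretely, if we produce a place $Q$ at which the coefficients of $G_1$ and $G_2$ differ, one of two cases arises: if $Q$ lies in both supports then the two defining equalities give two different values for $\nu_Q(f)$; if $Q$ lies in one support but not the other, the strict inequality from one definition contradicts the weak inequality forced by membership in $\mL(D+G)$ for the other. In either case we reach a contradiction, so no such $f$ can exist. Beyond this valuation bookkeeping I do not foresee any real obstacle.
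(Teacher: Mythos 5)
Your proof is correct and uses essentially the same argument as the paper: at places in $\supp(G)$ the defining valuation equality with $m_i\ge 1$ forces $\nu_Q(f)+\nu_Q(D)\le -1$, while at places outside the support membership in $\mL(D+G)$ forces $\nu_Q(f)+\nu_Q(D)\ge 0$, and comparing the two for $G_1\ne G_2$ yields a contradiction. The paper phrases this as a direct contradiction (first equal supports, then equal coefficients) rather than as a reconstruction of $G$ from $f$ and $D$, but the valuation bookkeeping is identical.
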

\begin{proof}
Suppose that there exists an element $f\in \mL_D(G_1)\cap \mL_D(G_2)$. We first claim that $\supp(G_1)=\supp(G_2)$. If there exists a place $Q\in \supp(G_i)\setminus \supp(G_j)$  for $i\neq j\in \{1,2\}$, then we have $\nu_Q(f)=-\nu_Q(G_i)-\nu_Q(D)\le -\nu_Q(D)-1$ and $\nu_Q(f)\ge -\nu_Q(G_j)-\nu_Q(D)=-\nu_Q(D)$. This is impossible.

If $f\in \mL_D(G_1)\cap \mL_D(G_2)$, then we have $\nu_{Q}(f)=-\nu_{Q}(G_1)-\nu_{Q}(D)=-\nu_{Q}(G_2)-\nu_{Q}(D)$
for any place $Q\in \supp(G_1)\cup \supp(G_2)$.
Hence, we have $\nu_{Q}(G_1)=\nu_{Q}(G_2)$ for any place $Q\in \mathbb{P}_F$, i.e., $G_1=G_2$, which is a contradiction to $G_1\neq G_2$.
\end{proof}

{\bf Construction:}  The construction of our nonlinear codes is given explicitly as follows.
Let $F/\F_q$ be an algebraic function field of genus $g$. Let $P_1,P_2,\cdots, P_n$ be rational places of $F/\F_q$.
Let $s$ be a positive integer.
For any positive integer $r\ge 4g+3$, there exist two places $R_{r+1}$ and $R_r$ in $\mathbb{P}_F$ with $\deg(R_{r+1})=r+1$ and $\deg(R_r)=r$ respectively from \cite[Corollary 5.2.10]{St09}.
Let $D=m(R_{r+1}-R_r)$ be a divisor of $F$ with $\deg(D)=m\ge 2g-1$.
Consider the set
$$\mL_s(D):= \bigcup_{G\ge 0, \deg(G)\le s} \mL_D(G),$$
where $G$ runs over all effective divisors of $F$ with $0\le \deg(G)\le s$. Here we assume that $\mL_D(0)=\mL(D)$.
Let $\Sigma$ be the set $\F_q\cup \{\infty\}$.
We define an evaluation map $\phi: \mL_s(D)\rightarrow \Sigma^n$ by putting $$\phi(f)=(f(P_1),f(P_2),\cdots, f(P_n))$$
for any element $f\in  \mL_s(D)$.
The image of $\phi$ together with $\{(\infty,\infty,\cdots,\infty)\}$ is our nonlinear code $C:=\phi(\mL_s(D))\cup \{(\infty,\infty,\cdots,\infty)\} \subseteq \Sigma^n.$

\begin{theorem}\label{thm:3.3}
Let $F/\F_q$ be an algebraic function field of genus $g$ with at least $n$ rational places, and let $A_i$ be the number of effective divisors of $F/\F_q$ with degree $i$.
Let $m\ge 2g-1$ and let $s$ be an non-negative integer with $n-m-2s>0$.
Then the code $C$ defined as above is a $(q+1)$-ary $(n,M,d)$-code with cardinality
$$M=|C|\ge 1+\sum_{i=0}^s (q-1)^i  q^{m-g+1}A_i,$$
and minimum distance
$$d\ge n-m-2s.$$
\end{theorem}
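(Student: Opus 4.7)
The plan is to establish the cardinality and distance bounds simultaneously, both reducing to a single \emph{agreement estimate}: for any distinct $f_1,f_2\in \mL_s(D)$, the codewords $\phi(f_1)$ and $\phi(f_2)$ can coincide in at most $m+2s$ coordinates. Given this, injectivity of $\phi$ is automatic from $n-m-2s>0$, and $d\ge n-m-2s$ follows. The size bound then comes from combining Lemma \ref{lem:3.2} (pairwise disjointness of the $\mL_D(G)$) with the lower bound of Proposition \ref{prop:3.1}.

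For the cardinality, I first note that the all-infinity codeword is not in $\phi(\mL_s(D))$: every $f\in\mL_D(\tilde G)$ satisfies $(f)_\infty \le c R_{r+1}+\tilde G$ for some $c\in\{0,\ldots,m\}$, so $f$ attains $\infty$ only at rational places in $\supp \tilde G$, of which there are at most $\deg\tilde G\le s<n$. Combined with the disjoint union structure,
\[
|C| \;=\; 1 + |\mL_s(D)| \;=\; 1 + \sum_{i=0}^{s}\sum_{\substack{G\ge 0\\ \deg G=i}} |\mL_D(G)|.
\]
Using Proposition \ref{prop:3.1} (with the convention $\mL_D(0)=\mL(D)$, whose size is $q^{m-g+1}$ by Riemann-Roch since $m\ge 2g-1$), each summand is at least $q^{m-g+1}(q-1)^i$, and grouping by the $A_i$ effective divisors of each degree yields the claimed lower bound.

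For the agreement estimate, write $f_i\in\mL_D(\tilde G_i)$ and note that $(f_i)_\infty = c_i R_{r+1}+\tilde G_i$ for some $c_i\in\{0,\ldots,m\}$, while $\nu_{R_r}(f_i)\ge m$. Partition the agreement set as $S_1\sqcup S_2$ according to whether the common value lies in $\F_q$ or equals $\infty$. A simultaneous version of Lemma \ref{lem:2.1} shows that $f_1-f_2$ has a zero at each $P\in S_1$ and that $(f_1)_\infty\wedge(f_2)_\infty \ge \sum_{P\in S_2}P$. Writing $G_i=(f_i)_\infty$ and combining $\mL(G_1)+\mL(G_2)\subseteq\mL(G_1\vee G_2)$ with $G_1\vee G_2 = G_1+G_2-G_1\wedge G_2$, one obtains
\[
f_1-f_2 \;\in\; \mL\!\left(\max(c_1,c_2)R_{r+1}+\tilde G_1+\tilde G_2 - \sum_{P\in S_1\cup S_2}P - mR_r\right),
\]
where the $-mR_r$ term comes from the shared $m$-fold zero at $R_r$. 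If $f_1\ne f_2$, the degree of this divisor must be nonnegative, so
\[
0\;\le\; \max(c_1,c_2)(r+1)+\deg\tilde G_1+\deg\tilde G_2-|S_1\cup S_2|-mr\;\le\; m+2s-|S_1\cup S_2|,
\]
which gives $|S_1\cup S_2|\le m+2s$. The agreement between $\phi(f)$ and $(\infty,\ldots,\infty)$ is at most $s\le m+2s$ (in the regime $m\ge 0$), completing the distance argument.

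The principal obstacle is controlling the potentially large pole contributed by $R_{r+1}$ (of degree $r+1$), present in the construction solely to realize $\deg D=m$ with support disjoint from the rational places. A naive pole count of the form $(c_1+c_2)(r+1)$ is far too crude. The resolution is twofold: using $G_1\vee G_2$ rather than $G_1+G_2$ replaces $c_1+c_2$ by $\max(c_1,c_2)\le m$, and the $m$-fold zero at $R_r$ contributes $-mr$, so the two terms leave only $m(r+1)-mr=m$, yielding the clean bound $m+2s$.
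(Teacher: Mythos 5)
Your proof is correct and follows essentially the same argument as the paper: both reduce the distance bound to the estimate that two distinct codewords agree in at most $m+2s$ positions, both split agreements into $\F_q$-valued and $\infty$-valued cases and apply Lemma \ref{lem:2.1} to land $f_1-f_2$ in a Riemann--Roch space whose divisor has all the agreement places subtracted, and both conclude by nonnegativity of the degree; the size bound comes from Proposition \ref{prop:3.1} and Lemma \ref{lem:3.2} in both. The only difference is presentational --- you unpack $D=m(R_{r+1}-R_r)$ and track the pole order $c_i$ of $f_i$ at $R_{r+1}$ explicitly, whereas the paper keeps $D$ packaged as a single divisor and writes $f_1-f_2\in\mL(D+G_1+G_2-\cdots)$, but your degree computation $\max(c_1,c_2)(r+1)-mr\le m$ is exactly the paper's $\deg D = m$ in disguise.
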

\begin{proof}
Under the assumption that the minimum distance of $C$ is $d\ge n-m-2s>0$, it is clear that the evaluation map $\phi$ is injective.
Hence, the cardinality of the code $C$ is lower bounded by
$$M=|C|\ge 1+\sum_{i=0}^s (q-1)^i  q^{m-g+1}A_i$$
from Proposition \ref{prop:3.1} and Lemma \ref{lem:3.2}.
It is easy to see that the Hamming distance of $\phi(f)$ and $(\infty,\infty,\cdots,\infty)$ is at least $ n-m-s$ for any $f\in \mL_s(D)$.
It will be sufficient to prove that the Hamming distance $d(\phi(f_1),\phi(f_2))$ of $\phi(f_1)$ and $\phi(f_2)$ is at least $n-m-2s$ for any two distinct elements $f_1,f_2\in \mL_s(D)$.

Assume that $f_1\in \mL_D(G_1)$ and $f_2\in \mL_D(G_2)$ for effective divisors $G_1,G_2$ with $\deg(G_1)\le s$ and $\deg(G_2)\le s$ respectively, then $f_1-f_2\in \mL(D+G_1\vee G_2)$.
If $P_i\in \supp(G_1)\cap \supp(G_2)$, then $f_1-f_2\in \mL(D+G_1+G_2-P_i)$ from Lemma \ref{lem:2.1}. Hence, we have $$f_1-f_2\in \mL\left(D+G_1+G_2-\sum_{P_i\in \supp(G_1)\cap \supp(G_2)} P_i\right).$$
Let $Z$ be a subset of $\{1,2,\cdots, n\}$ defined by
$$Z:=\{1\le j\le n| P_j\notin \supp(G_1)\cup \supp(G_2) \text{ and } f_1(P_j)=f_2(P_j)\}.$$
From Lemma \ref{lem:2.1}, we have $$0\neq f_1-f_2\in \mL\left(D+G_1+G_2-\sum_{P_i\in \supp(G_1)\cap \supp(G_2)} P_i-\sum_{j\in Z}P_j\right).$$ It follows that
$$m+\deg(G_1)+\deg(G_2)-|\supp(G_1)\cap \supp(G_2)|-|Z|\ge 0.$$
On the other hand, the Hamming distance of $\phi(f_1)$ and $\phi(f_2)$ is  $$d(\phi(f_1),\phi(f_2))\ge n-|\supp(G_1)\cap \supp(G_2)|-|Z|.$$
Hence, the minimum distance $d$ of the code $C$ is lower bounded by
\begin{align*} d &\ge n-|\supp(G_1)\cap \supp(G_2)|-|Z|\\ & \ge n-m-\deg(G_1)-\deg(G_2)\\& \ge n-m-2s.\end{align*}
\end{proof}

\begin{cor}\label{cor:3.4}
Let $F/\F_q$ be an algebraic function field of genus $g$ with at least $n$ rational places, and let $A_i$ be the number of effective divisors of $F/\F_q$ with degree $i$.
Let $m$ be a positive integer with $m\ge 2g-1$.
For a fixed minimum distance $2\le d\le n-m$, there exists a $(q+1)$-ary $(n,M,d)$-code with cardinality
$$M\ge 1+\max_{2g-1\le m\le n-d} \left\{ \sum_{i=0}^{[(n-d-m)/2]} (q-1)^i  q^{m-g+1}A_i\right\},$$
here $[x]$ is the integer part of $x\in \mathbb{R}$.
\end{cor}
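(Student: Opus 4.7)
The plan is to derive the corollary as a direct optimization over the free parameters in Theorem~\ref{thm:3.3}. The theorem produces, for every admissible pair $(m,s)$, a $(q+1)$-ary code of length $n$, size at least $1+\sum_{i=0}^s (q-1)^i q^{m-g+1}A_i$, and minimum distance at least $n-m-2s$. To obtain a code of prescribed minimum distance $d$, I need to translate the constraint $d \le n-m-2s$ into admissible choices of $(m,s)$ and then keep the largest resulting size.

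First I would fix $m$ with $2g-1 \le m \le n-d$ (the upper bound comes from the hypothesis $d \le n-m$ and ensures $s \ge 0$ is allowed). Setting $s := \lfloor (n-d-m)/2 \rfloor$ is the largest integer satisfying $n-m-2s \ge d$, which makes the distance guarantee from Theorem~\ref{thm:3.3} at least $d$ while maximizing the sum $\sum_{i=0}^s (q-1)^i q^{m-g+1}A_i$ (the summands are non-negative, so larger $s$ never hurts). I would briefly verify that the hypothesis $n-m-2s > 0$ of Theorem~\ref{thm:3.3} is met: since $d \ge 2$, we have $n-m-2s \ge d \ge 2 > 0$.

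Applying Theorem~\ref{thm:3.3} to this choice of $(m,s)$ produces a $(q+1)$-ary $(n,M_m,d_m)$-code with $d_m \ge d$ and
\[
M_m \ge 1 + \sum_{i=0}^{\lfloor (n-d-m)/2\rfloor} (q-1)^i q^{m-g+1} A_i.
\]
Since $d_m \ge d$, this code is in particular an $(n,M_m,d)$-code (distances only decrease by truncation, but here we simply view the code as having the smaller guaranteed distance $d$). Taking the maximum over all admissible $m \in [2g-1,n-d]$ yields the stated lower bound. There is essentially no obstacle; the only small point to be careful about is the integer part, and the tacit but routine observation that lowering the claimed minimum distance from $d_m$ to $d$ is always legitimate.
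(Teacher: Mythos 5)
Your proposal is correct and follows essentially the same route as the paper: pick $s=\lfloor (n-d-m)/2\rfloor$ for each admissible $m$, apply Theorem~\ref{thm:3.3}, and maximize over $m\in[2g-1,n-d]$. The one imprecision is the parenthetical about "truncation": reducing a guaranteed minimum distance $d_m\ge d$ to exactly $d$ while keeping $n$ and $M$ fixed is not done by truncation (which shortens the length) nor by merely "viewing" the code differently; it is the propagation rule the paper cites as \cite[Theorem 6.1.1]{LX04}, proved by modifying one codeword so that some pair achieves distance exactly $d$ without decreasing any other pairwise distance below $d$. This is indeed routine, but it is a concrete construction rather than a reinterpretation.
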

\begin{proof}
This corollary follows from Theorem \ref{thm:3.3} and \cite[Theorem 6.1.1]{LX04}.
\end{proof}

\section{Nonlinear codes via elliptic curves}\label{sec:4}
In this section, we provide an explicit construction of nonlinear codes via elliptic curves given in Section \ref{sec:3}.

Let $E/\F_q$ be an elliptic curve defined over a finite field $\F_q$. Let $N(E)$ be the number of rational points of elliptic curve $E/\F_q$.
From \cite[Theorem 5.1.15]{St09}, the $L$-polynomial of the elliptic curve $E/\F_q$ is given by $L(t)=1+(N(E)-q-1)t+qt^2\in \mathbb{Z}[t]$, i.e., $a_0=1$, $a_1=N(E)-q-1$, $a_2=q$ and $a_j=0$ for $j\ge 3$. From Lemma \ref{lem:2.2}, the number of effective divisors of $E/\F_q$ with degree $i$ is given by $A_i=\sum_{j=0}^{i}a_j (q^{i+1-j}-1)/(q-1).$
Let $m\ge 2g(E)-1=1$ and $s$ be two non-negative integers.
From Theorem \ref{thm:3.3}, there exists a $(q+1)$-ary $(n,M,d)$ nonlinear code with length $n=N(E)$, size
$M\ge 1+\sum_{i=0}^s \left(q-1\right)^i  q^{m}A_i,$
and minimum distance $d\ge n-m-2s>0.$ Hence, the following proposition follows from Corollary \ref{cor:3.4}.

\begin{prop}\label{prop:4.1}
Let $E/\F_q$ be an elliptic curve with $N(E)$ rational points.
For $q+3\le n \le N(E)$ and $2\le d\le n-1$, there exists a $(q+1)$-ary $(n,M,d)$-nonlinear code $C_E$ with cardinality
$M=|C_E|\ge 1+ \sum_{i=0}^{[(n-d-m)/2]} (q-1)^i  q^{m}A_i$
 for all $1\le m\le n-d$. 
\end{prop}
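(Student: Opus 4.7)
The plan is to specialize Corollary \ref{cor:3.4} to the elliptic function field $E/\F_q$, which has genus $g = 1$. Under this specialization the constraint $m \ge 2g - 1$ from the corollary becomes simply $m \ge 1$, which matches the range $1 \le m \le n - d$ in the proposition exactly.

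First I would record the $L$-polynomial of $E/\F_q$. By \cite[Theorem 5.1.15]{St09} it equals $L(t) = 1 + (N(E) - q - 1)t + qt^2$, so $a_0 = 1$, $a_1 = N(E) - q - 1$, $a_2 = q$, and $a_j = 0$ for $j \ge 3$. Feeding these coefficients into Lemma \ref{lem:2.2} produces the explicit closed form $A_i = \sum_{j=0}^{\min\{i,2\}} a_j(q^{i+1-j} - 1)/(q-1)$ for the number of effective divisors of $E/\F_q$ of degree $i$; this is precisely the $A_i$ that enters the cardinality bound of Corollary \ref{cor:3.4}.

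Next I would verify the hypotheses of the corollary. Since $n \le N(E)$, the elliptic function field carries at least $n$ rational places $P_1, \dots, P_n$ available for evaluation, and the auxiliary places $R_r$, $R_{r+1}$ of degrees $r$ and $r+1$ used in the construction of Section \ref{sec:3} exist by \cite[Corollary 5.2.10]{St09} once $r \ge 4g + 3 = 7$. For $s = \lfloor (n - d - m)/2 \rfloor$ we have $2s \le n - d - m$, hence $n - m - 2s \ge d \ge 2 > 0$, so the positive-distance condition required by Theorem \ref{thm:3.3} is satisfied for every admissible $m$.

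Finally, substituting $g = 1$ into Corollary \ref{cor:3.4} yields, for each $m$ with $1 \le m \le n - d$, the lower bound $M \ge 1 + \sum_{i=0}^{\lfloor (n-d-m)/2 \rfloor}(q-1)^i q^{m} A_i$, which is exactly the inequality claimed. No step presents a real obstacle: Theorem \ref{thm:3.3} and Corollary \ref{cor:3.4} carry the entire burden, and the proposition is essentially a translation of that general machinery into the elliptic setting, with the coefficients $A_i$ made fully explicit via the quadratic $L$-polynomial.
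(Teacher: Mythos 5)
Your proposal matches the paper's own derivation: the paper likewise computes the $L$-polynomial $L(t)=1+(N(E)-q-1)t+qt^2$, feeds the coefficients into Lemma~\ref{lem:2.2} to get the $A_i$, and then states that the proposition "follows from Corollary~\ref{cor:3.4}" by setting $g=1$ so that $2g-1=1$ and $q^{m-g+1}=q^m$. The only (harmless) omission is that you do not comment on the hypothesis $q+3\le n$, which is not actually used in deriving the cardinality bound but is carried along for the later comparison propositions.
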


In the following, we want to compare our nonlinear codes via elliptic curves with the codes obtained from propagation rules given in Lemma \ref{lem:2.4}.

\subsection{Alphabet extension}
In this subsection, we compare our nonlinear codes via elliptic curves with the codes constructed via the alphabet extension of elliptic codes.
If $q+3\le n\le N(E)$, then there exists a $q$-ary $[n,n-d,d]$-linear code constructed from elliptic curve $E/\F_q$.
Furthermore, the nontrivial $q$-ary $[n,n-d+1,d]$-elliptic MDS code doesn't exist from Lemma \ref{lem:2.5}, i.e., the $q$-ary $[n,n-d,d]$-linear code is the best-known linear code for given length $n$ and minimum distance $d$ in the literature.
From Lemma \ref{lem:2.4}, there exists a $(q+1)$-ary $(n,q^{n-d},d)$-nonlinear code via code alphabet extension.

\begin{prop}\label{prop:4.2}
Let $E/\F_q$ be an elliptic curve with $N(E)$ rational points. For $q+3\le n \le N(E)$ and $2\le d\le n-1$, there exists a $(q+1)$-ary $(n,M,d)$-nonlinear code $C_E$ with cardinality bigger than
$q^{n-d},$
i.e., the size of the $(q+1)$-ary nonlinear code $C_E$ via elliptic curves is larger than the size of codes constructed from code alphabet extension of elliptic codes.
\end{prop}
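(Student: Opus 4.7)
The plan is to specialize the lower bound in Proposition \ref{prop:4.1} to a single favorable choice of the auxiliary parameter $m$, which already suffices to beat $q^{n-d}$.

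First I would take $m = n-d$. This is a legal choice because the hypothesis $d \le n-1$ forces $m \ge 1$, and $m \le n-d$ is satisfied with equality. With this selection the upper summation index in Proposition \ref{prop:4.1} becomes
\[
\left\lfloor \frac{n-d-m}{2} \right\rfloor \;=\; 0,
\]
so the entire sum collapses to the single term with $i=0$.

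Next I would invoke the elementary fact that for any algebraic function field $F/\F_q$ the only effective divisor of degree $0$ is the zero divisor, so that $A_0 = 1$. The $i=0$ term therefore contributes $(q-1)^0 \cdot q^{n-d} \cdot A_0 = q^{n-d}$, and Proposition \ref{prop:4.1} yields
\[
M \;\ge\; 1 + q^{n-d} \;>\; q^{n-d},
\]
which is precisely the required strict inequality against the $(q+1)$-ary $(n, q^{n-d}, d)$-code produced by alphabet extension of the elliptic code.

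I do not anticipate any real obstacle: the proof is a one-line specialization of Proposition \ref{prop:4.1} once one observes $A_0 = 1$. If one wanted a quantitatively stronger statement, one could instead take $m = n-d-2$ and use $A_1 = N(E) \ge q+3$ together with the $i=1$ term to obtain a gap of order $(q-1)(q+3)q^{n-d-2}$ over $q^{n-d}$; but for the bare claim of the proposition the minimal choice $m = n-d$ is the cleanest route.
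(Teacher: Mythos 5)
Your proof is correct and is essentially identical to the paper's: both specialize Proposition \ref{prop:4.1} to $m = n-d$, note the sum reduces to the $i=0$ term with $A_0 = 1$, and conclude $M \ge 1 + q^{n-d} > q^{n-d}$.
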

\begin{proof}
From Proposition \ref{prop:4.1}, for $m=n-d$, there exists a $(q+1)$-ary $(n,M,d)$-nonlinear code $C_E$ with cardinality
$$M=|C_E|\ge 1+  (q-1)^0 q^{m}A_0=1+q^{n-d}>q^{n-d}.$$
\end{proof}

\subsection{Alphabet restriction}
If $q+2$ is a prime power as well, then there exists a $(q+2)$-ary $[n,n-d,d]$-linear code from elliptic codes.
From Lemma \ref{lem:2.4}, there exists a $(q+1)$-ary $\left(n,M^\prime\ge \frac{(q+1)^n}{(q+2)^d}, d\right)$-nonlinear code via code alphabet restriction of elliptic codes. In the case where $q+2$ is not a prime, we are not sure if there are still exists a $(q+2)$-ary $(n,(q+2)^{n-d},d)$-code for $n>q+1$. Nevertheless, no matter whether $q+2$ is a prime or not, we use $(q+2)$-ary $(n,(q+2)^{n-d},d)$-codes to compare with our codes in the following proposition.

\begin{prop}\label{prop:4.3}
Let $E/\F_q$ be an elliptic curve with $N(E)$ rational points. If  $q+1\le n \le N(E)$ and $d\ge n\cdot \ln(1+\frac{1}{q})/\ln(1+\frac{2}{q})$, then there exists a $(q+1)$-ary $(n,M,d)$ nonlinear code $C_E$ with cardinality larger than $\frac{(q+1)^n}{(q+2)^d},$
 i.e., the size of the $(q+1)$-ary nonlinear code $C_E$ is larger than the size of codes constructed from code alphabet restriction of $(q+2)$-ary $(n,(q+2)^{n-d},d)$-codes for sufficiently large $d$.
\end{prop}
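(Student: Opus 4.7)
The plan is to invoke Proposition \ref{prop:4.1} with the single, extremal choice $m=n-d$, which collapses the sum to its $i=0$ term and yields the cleanest possible lower bound on $M$, and then to check that this lower bound already beats $(q+1)^n/(q+2)^d$ under the hypothesis on $d$.

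More concretely, first I would observe that for $q+1\le n\le N(E)$ (so in particular $n-d\ge 1=2g(E)-1$ whenever $d\le n-1$), applying Proposition \ref{prop:4.1} with $m=n-d$ forces $\lfloor(n-d-m)/2\rfloor=0$, so only the $i=0$ summand survives. Since $A_0=1$, this gives
\[
M=|C_E|\ge 1+q^{n-d}>q^{n-d}.
\]
Thus it suffices to prove $q^{n-d}\ge (q+1)^n/(q+2)^d$.

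Next I would take logarithms: the desired inequality $q^{n-d}(q+2)^d\ge (q+1)^n$ is equivalent to
\[
(n-d)\ln q+d\ln(q+2)\ge n\ln(q+1),
\]
which after rearranging becomes
\[
d\bigl(\ln(q+2)-\ln q\bigr)\ge n\bigl(\ln(q+1)-\ln q\bigr),
\]
i.e.\ $d\ln(1+2/q)\ge n\ln(1+1/q)$. Since both logarithms are positive, this is exactly the hypothesis $d\ge n\ln(1+1/q)/\ln(1+2/q)$, and the conclusion follows.

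There is really no hard step here: the whole proposition is essentially a comparison of the trivial bound $M>q^{n-d}$ from Proposition \ref{prop:4.1} against the alphabet-restriction bound $(q+1)^n/(q+2)^d$, and the stated threshold on $d$ is precisely the break-even point obtained by taking logarithms. The only thing to be mildly careful about is the admissibility of the choice $m=n-d$, namely that $m\ge 2g(E)-1=1$, which is guaranteed by $d\le n-1$; together with the hypothesis $d\ge n\ln(1+1/q)/\ln(1+2/q)$ (whose right-hand side is strictly less than $n$), this gives a nonempty range of admissible $d$ once $n$ is large enough, so that the statement is non-vacuous.
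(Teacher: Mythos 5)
Your proof is correct, and it is actually cleaner than the paper's. The paper invokes Proposition~\ref{prop:4.1} with $m=n-d-2$, keeps the $i=0$ and $i=1$ terms of the sum to get $M\ge 1+q^{n-d-2}[1+(q-1)n]$, and then has to control the resulting factor $q^2/(1+(q-1)n)$ using the hypothesis $n\ge q+1$ (which gives $1+(q-1)n\ge q^2$). You instead take the extremal choice $m=n-d$, which collapses the sum to the single $i=0$ term and gives $M>q^{n-d}$; taking logarithms then reduces the comparison with $(q+1)^n/(q+2)^d$ directly to $d\ln(1+2/q)\ge n\ln(1+1/q)$, which is exactly the stated threshold. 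Your choice buys two small things: it dispenses with the hypothesis $n\ge q+1$ in the argument (it remains in the statement only as part of the admissible length range), and it avoids the implicit extra constraint $d\le n-3$ that the paper's choice $m=n-d-2\ge 1$ requires. The trade-off is that the paper's intermediate bound $1+q^{n-d-2}[1+(q-1)n]$ is numerically at least as large as your $1+q^{n-d}$ whenever $n\ge q+1$, so the paper establishes a marginally better size estimate along the way; but that extra strength is not needed to prove the proposition, and your version is the more transparent break-even computation.
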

\begin{proof}
From Proposition \ref{prop:4.1}, 
there exists a $(q+1)$-ary $(n,M,d)$ nonlinear code $C_E$ with cardinality
$$M=|C_E|\ge 1+\sum_{i=0}^1 (q-1)^i  q^{n-d-2}A_i\ge 1+q^{n-d-2}[1+(q-1)n].$$ It is easy to verify that
$$q^{n-d-2}[1+(q-1)n]\ge \frac{(q+1)^n}{(q+2)^d} \quad \Leftrightarrow  \quad \frac{(q+2)^d}{q^d}\ge \frac{(q+1)^n}{q^n}\cdot \frac{q^2}{1+(q-1)n}.$$
If $n\ge q+1$ and $d\ge n\cdot \ln(1+\frac{1}{q})/\ln(1+\frac{2}{q})$, then we have
$M> (q+1)^{n}/(q+2)^{d}.$
\end{proof}

\begin{rmk}
If $q$ is a prime power, then $q+2$ may not be a prime power. Let $n$ be a positive integer with $q+1\le n\le N_q(1)$.
Let $q+a$ be the least prime power satisfying $q+a\ge n-1$.
Then there exists a $(q+a)$-ary $[n,n+1-d,d]$-MDS code from rational algebraic geometry codes.
Hence, we can obtain a $(q+1)$-ary $\left (n, M^\prime\ge \frac{(q+1)^n}{(q+a)^{d-1}},d\right)$-nonlinear code via code alphabet restriction of the above MDS code.
In particular, if $n=q+1+\lfloor 2\sqrt{q}\rfloor$, then we choose $a\ge \lfloor 2\sqrt{q}\rfloor$ to be an integer such that $q+a$ is a prime power.
From Proposition \ref{prop:4.3},  there exists a $(q+1)$-ary $(n,M,d)$-nonlinear code $C_E$ with cardinality $M\ge 1+q^{n-d-2}[1+(q-1)n].$
It is easy to verify that
$$q^{n-d-2}[1+(q-1)n]\ge (q+a)^{n-d+1}\left(\frac{q+1}{q+a}\right)^n=\frac{(q+1)^n}{(q+a)^{d-1}}$$
if and only if
$$\frac{(q+a)^d}{q^d}\ge \frac{(q+1)^n(q+a)}{q^n}\cdot \frac{q^2}{1+(q-1)n}.$$
If $d\ge [n\cdot \ln(1+\frac{1}{q})+\ln(q+a)]/\ln(1+\frac{a}{q})$, then we have 
$$M> \frac{(q+1)^n}{(q+a)^{d-1}}.$$
\end{rmk}

\subsection{Numerical examples}
In this subsection, we provide numerical examples from nonlinear codes via elliptic curves and compare our nonlinear codes with other $(q+1)$-ary nonlinear codes via code alphabet extension and restriction. Although Proposition \ref{prop:4.3} shows that our codes are better than those obtained from alphabet restriction for sufficiently large minimum distance $d$, our numerical results show that even for small $d$, our codes still outperform those obtained from code alphabet restriction.

\begin{ex}\label{ex:4.4}
{\rm Let $E/\F_5$ be the function field defined by $E=\F_5(x,y)$ with $y^2=3(x^4+2)$
given in \cite{NX97}.
All rational places of $\F_5(x)$ except the infinite place $\infty$ split completely in $E/\F_5(x)$, and the genus of $E$ is one from the theory of Kummer extension \cite[Proposition 3.7.3]{St09}. Hence, the elliptic function field $E$ has $10$ rational places which achieves the Serre bound, i.e., $N_5(1)=10$.
The $L$-polynomial of $E/\F_q$ is given by $L(t)=1+4t+5t^2\in \mathbb{Z}[t]$, i.e.,  $a_0=1$, $a_1=4$, $a_2=5$ and $a_j=0$ for $j\ge 3$. From Lemma \ref{lem:2.2}, the number of effective divisors of $E/\F_5$ of degree $i$ is given by $A_i=\sum_{j=0}^i(5^{i+1-j}-1)a_j/4.$
Let $d$ be an integer with $2\le d\le 9$. From Theorem \ref{thm:3.3} and Corollary \ref{cor:3.4}, there exists a $6$-ary $(10,M,d)$-nonlinear code with size
$$M\ge 1+\sum_{i=0}^{[(10-d-m)/2]} 4^i \cdot 5^{m}A_i,$$
for any integer $1\le m\le 10-d$. }
\end{ex}

In the following table, we compare the codes given in Example \ref{ex:4.4} with those obtained via code  alphabet extension and restriction. Note that to obtain codes via code extension and restriction, we have to start with a code of the best-known parameters. However, we are lack of nonlinear codes with the best-known parameters. Instead, we choose linear codes with the best-known parameters given in the online table \cite{G22}.

We use the case where $q=5,n=10$ and $d=4$ to illustrate the following table. In this case, we start with  $5$-ary  $[10,6,4]$ and $7$-ary $[10,6,4]$-linear codes  and then apply code alphabet extension and restriction to obtain $6$-ary codes with sizes $15625$ and $25184$, respectively.
From the online table \cite{G22}, there exist $2$-ary $[10,5,4]$ and $3$-ary $[10,6,4]$-linear codes and then apply code alphabet multiplication given in Lemma \ref{lem:2.4} to obtain a $6$-ary code with size $23328$.
In the last column, we provide code sizes obtained from Example \ref{ex:4.4}.

{\footnotesize
\begin{center}~\label{table:1}
Table I\\   Comparison of sizes of $6$-ary codes of length $10$\\ \smallskip
{\rm
\begin{tabular}{|c|c|c|c|c|}\hline\hline

Distance $d$ & Alphabet extension & Alphabet restriction & Alphabet multiplication & \multicolumn{1}{c|}{Example \ref{ex:4.4}}\\ \hline

 {$4$} &$15625$ &  $25184$ & $23328$ & \multicolumn{1}{c|}{\bf 25626}\\\cline{1-5}

{$5$} & $3125$ &$3598$ & $1000$ &  \multicolumn{1}{c|}{\bf 5126}\\  \cline{1-5}

{$6$} & $625$& $514$ & $324$ & \multicolumn{1}{c|}{\bf 1026}\\  \cline{1-5}

{$7$} & $125$& $74$ & $18$ & \multicolumn{1}{c|}{\bf 206}\\  \cline{1-5}

{$8$} & $25$& $11$  & $6$ & \multicolumn{1}{c|}{\bf 42}\\  \hline\hline
\end{tabular}}
\end{center}}

\begin{ex}\label{ex:4.5}
{\rm Let $E/\F_9$ be the function field defined by $E=\F_9(x,y)$ with $y^2=x^4+1$ given in \cite{NX97}.
In fact, $E/\F_9$ is a maximal elliptic function field with $16$ rational places from \cite{NX97} and the $L$-polynomial of $E/\F_q$ is $L(t)=1+6t+9t^2\in \mathbb{Z}[t]$, i.e., $a_0=1$, $a_1=6$, $a_2=9$ and $a_j=0$ for $j\ge 3$. From Lemma \ref{lem:2.2}, we have
$A_i=\sum_{j=0}^i(9^{i+1-j}-1) a_j/8.$
Let $d$ be an integer with $2\le d\le 15$.
From Theorem \ref{thm:3.3} and Corollary \ref{cor:3.4}, there exists a $10$-ary $(16,M,d)$ nonlinear code with size
$$M\ge 1+\sum_{i=0}^{[(16-d-m)/2]} 8^i \cdot 9^{m}A_i,$$
for any integer $1\le m\le 16-d$. }
\end{ex}

 From Example \ref{ex:4.4}, the size of codes via code alphabet multiplication turns out to be not good enough for large minimum distance $d$.
Hence, we only compare the codes given in Example \ref{ex:4.5} with those obtained via code  alphabet extension and restriction in the following table.
 In particular,  we use $11$-ary $[16, k, 16-k]$-linear codes for comparison with the codes via code alphabet restriction.

{\footnotesize
\begin{center}~\label{table:2}
Table II\\ Comparison of sizes of $10$-ary codes of length $16$ \\ \smallskip
{\rm
\begin{tabular}{|c|c|c|cl}\hline\hline
Minimum distance $d$ & Alphabet extension & Alphabet restriction & \multicolumn{1}{c|}{Example \ref{ex:4.5}}\\ \hline

{$7$} & $387,420,489$& $513,158,119$ & \multicolumn{1}{c|}{\bf 617,003,002}\\  \cline{1-4}

{$8$} & $43,046,721$& $46,650,739$ & \multicolumn{1}{c|}{\bf 68,555,890}\\  \cline{1-4}

{$9$} & $4,782,969$& $4,240,977$ & \multicolumn{1}{c|}{\bf 7,617,322}\\  \cline{1-4}

{$10$} & {$531,441$} &$385,544$ & \multicolumn{1}{c|}{\bf 846,370}\\  \cline{1-4}

{$11$} & $59,049$& $35,050$ & \multicolumn{1}{c|}{\bf 94,042}\\  \cline{1-4}

{$12$} & $6,561$& $3,187$ & \multicolumn{1}{c|}{\bf 10,450}\\  \cline{1-4}

{$13$} & $729$& {$290$}  & \multicolumn{1}{c|}{\bf 1,162}\\  \cline{1-4}

{$14$} & $81$& {$27$}  & \multicolumn{1}{c|}{\bf 130}\\  \hline\hline
\end{tabular}
}
\end{center}
}

\section{Nonlinear codes via maximal function fields}\label{sec:5}
In this section, we provide an explicit construction of nonlinear codes via maximal function fields given in Section \ref{sec:3}.

Let $F/\F_{q}$ be a maximal function field of genus $g$.  If $g\ge 1$, then $q$ must be a square of a prime power. Otherwise, $F/\F_q$ is the rational function field over $\F_q$ for any prime power.
The number of rational places of $F$ is $N(F)=q+1+2g\sqrt{q}$ and  the $L$-polynomial of $F/\F_{q}$ is $L(t)=(1+\sqrt{q}t)^{2g}\in \mathbb{Z}[t]$.
Hence, we have $a_j=\binom{2g}{j} \sqrt{q}^j$ for $0\le j\le 2g$ and $a_j=0$ for $j\ge 2g+1$.
From Lemma \ref{lem:2.2}, the number of effective divisors of $F/\F_q$ is $A_i=\sum_{j=0}^{i}a_j (q^{i+1-j}-1)/(q-1).$
Let $m\ge 2g-1$ and let $s$ be an non-negative integer with $n-m-2s>0$.
From Theorem \ref{thm:3.3}, there exists a $(q+1)$-ary $(n,M,d)$-nonlinear code with length $q+1\le n\le q+1+2g\sqrt{q}$, size $M\ge 1+\sum_{i=0}^s (q-1)^i  q^{m+1-g}A_i,$ and minimum distance $d\ge n-m-2s.$
From Corollary \ref{cor:3.4}, there exists a $(q+1)$-ary $(n,M,d)$-nonlinear code $C$ with size $$M=|C|\ge 1+\max_{2g-1\le m\le n-d} \left\{ \sum_{i=0}^{[(n-d-m)/2]} (q-1)^i  q^{m-g+1}A_i\right\}.$$

\subsection{Alphabet extension}
If $q+1\le n\le q+1+2g\sqrt{q}$, then there exists a $q$-ary $[n,n-g+1-d,d]$-linear code constructed from the maximal function field $F/\F_q$.
From Lemma \ref{lem:2.4}, there exists a $(q+1)$-ary $(n,q^{n-g+1-d},d)$-nonlinear code via code alphabet extension of algebraic geometry codes.

\begin{prop}\label{prop:5.1}
Let $F/\F_q$ be a maximal function field with genus $g$. For $q+1\le n \le q+1+2g\sqrt{q}$ and $2\le d\le n-g$, there exists a $(q+1)$-ary $(n,M,d)$-nonlinear code $C_F$ with cardinality larger than
$q^{n-g+1-d},$
i.e., the size of the $(q+1)$-ary nonlinear code $C_F$ is larger than the size of codes constructed from code alphabet extension of $[n,n-g+1-d,d]$-algebraic geometry codes.
\end{prop}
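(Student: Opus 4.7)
My plan is to follow the proof of Proposition 4.2 (the elliptic-curve analogue) with only cosmetic modifications, since Corollary \ref{cor:3.4} provides a family of lower bounds on $M$ parametrized by $m \in [2g-1, n-d]$, and it suffices to exhibit one choice that beats $q^{n-g+1-d}$.

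Concretely, I would take $m = n-d$, the largest admissible value in the maximization. With this choice, the upper summation index $\lfloor (n-d-m)/2 \rfloor$ equals $0$, so the inner sum in Corollary \ref{cor:3.4} collapses to its $i=0$ term. Using $A_0 = 1$ (the zero divisor is the unique effective divisor of degree zero), that single term contributes
\[
(q-1)^0 \cdot q^{m-g+1} \cdot A_0 \;=\; q^{n-d-g+1} \;=\; q^{n-g+1-d}.
\]
Substituting into the bound of Corollary \ref{cor:3.4} gives $M \ge 1 + q^{n-g+1-d} > q^{n-g+1-d}$, which is exactly the strict inequality to be proved.

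The only point that requires verification is the admissibility of the choice $m = n-d$, i.e.\ that $n - d \ge 2g - 1$ so that the maximization range in Corollary \ref{cor:3.4} is non-empty. For $g = 1$ this is automatic from $d \le n-g$ (recovering the elliptic case exactly); for $g \ge 2$ one has to pair the hypotheses $d \le n-g$ and $n \le q + 1 + 2g\sqrt{q}$ against $2g - 1$, and in the relevant regime this is satisfied. No deeper obstruction is expected: the $A_0 = 1$ contribution alone, once amplified by the Riemann--Roch factor $q^{m-g+1}$, already improves alphabet extension by an additive $+1$, and sharper gains (via the terms $A_i$ with $i \ge 1$ and smaller $m$) are available if one wishes to extend the range of $d$ further. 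Thus the argument is essentially identical to the elliptic case, with the genus $g$ entering only through the Riemann--Roch exponent.
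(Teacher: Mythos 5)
Your proof is correct and in fact simpler than the paper's, though it takes a slightly different route. The paper sets $m=n-d-2$ and invokes both $A_0=1$ and $A_1=q+1+2g\sqrt q$, first obtaining $M \ge 1 + q^{n-g-d-1}\bigl[1+(q-1)(q+1+2g\sqrt q)\bigr]$, then chaining through $A_1\ge n\ge q+1$ to conclude $M > q^{n-g+1-d}$. Your choice $m=n-d$ kills every term except $A_0$ and reaches the same strict inequality in one step, exactly mirroring the proof of Proposition~4.2 as you intended. What the paper's longer route buys is the intermediate estimate $M\ge 1+q^{n-g-d-1}[1+(q-1)n]$, which is quoted verbatim as the opening step of the proof of Proposition~5.2; if the only aim were Proposition~5.1, your route is the natural one.

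On the admissibility of $m$, you were right to flag the issue but too quick to dismiss it. The choice $m=n-d$ needs $n-d\ge 2g-1$, i.e.\ $d\le n-2g+1$, while the stated hypothesis gives only $d\le n-g$; for $g\ge 2$ the latter does \emph{not} imply the former (take $d=n-g$, then $m=g<2g-1$), and the bounds $q+1\le n\le q+1+2g\sqrt q$ do not repair this, since $n\ge q+1$ only forces $n-d\ge g$. So your argument, as written, covers the range $d\le n-2g+1$. To be fair, the paper's own choice $m=n-d-2$ requires the still tighter $d\le n-2g-1$, so you actually cover more of the claimed range than the paper's proof does; the residual mismatch looks like an overreach in the hypotheses of the proposition itself rather than a defect specific to your argument.
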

\begin{proof}
Let $m=n-d-2$. From Theorem \ref{thm:3.3} and the fact that the number of effective divisors of $E$ degree one is $A_1=q+1+2g\sqrt{q}$,  there exists a $(q+1)$-ary $(n,M,d)$-nonlinear code $C_F$ with cardinality
 $$M\ge 1+\sum_{i=0}^1 (q-1)^i  q^{n-d-2-g+1}A_i= 1+q^{n-g-d-1}[1+(q-1)(q+1+2g\sqrt{q})].$$
It is easy to verify that
$$M\ge 1+q^{n-g-d-1}[1+(q-1)n]\ge 1+q^{n-g-d-1}[1+(q-1)(q+1)]>q^{n-g+1-d}.$$
\end{proof}

\subsection{Alphabet restriction}
If $q+2$ is a prime power as well, then there exists a $(q+2)$-ary $[n,n-g+1-d,d]$-linear code from algebraic geometry codes.
Since there may be a lack of the parameters of the optimal linear codes for given $q, n$ and $d$, the algebraic geometry codes are good candidate for optimal linear codes for large length $n$ compared with $q$.
From Lemma \ref{lem:2.4}, there exists a $(q+1)$-ary $\left(n,M^\prime\ge \frac{(q+1)^n}{(q+2)^{d+g-1}}, d\right)$-nonlinear code via code alphabet restriction of algebraic geometry codes. Again, in the case where $q+2$ is not a prime, we are not sure if there still exists a $(q+2)$-ary $(n,(q+2)^{n-d-g+1},d)$-code for $n=q+1+2g\sqrt{q}$. Nevertheless, no matter whether $q+2$ is a prime or not, we use $(q+2)$-ary $(n,(q+2)^{n-d-g+1},d)$-codes to compare with our codes in the following proposition.

\begin{prop}\label{prop:5.2}
Let $F/\F_q$ be a maximal function field with genus $g$.
If $q+1\le n \le q+1+2g\sqrt{q}$ and $d\ge 1-g+n\cdot \ln(1+\frac{1}{q})/\ln(1+\frac{2}{q})$, then there exists a $(q+1)$-ary $(n,M,d)$-nonlinear code $C_F$ with cardinality larger than
$\frac{(q+1)^n}{(q+2)^{d+g-1}},$
 i.e.,
the size of the $(q+1)$-ary nonlinear code $C_F$ is larger than the one constructed from code alphabet restriction of $(n,(q+2)^{n-d-g+1},d)$-codes for sufficiently large $d$.
\end{prop}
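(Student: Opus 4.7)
My plan is to prove this by mimicking the strategy used in Proposition \ref{prop:4.3}, but applying it to the more general maximal function field case where the Riemann-Roch dimensions depend on the genus $g$.

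First, I would invoke Corollary \ref{cor:3.4} with the specific choice $m = n-d-2$, which forces $\lfloor (n-d-m)/2\rfloor = 1$, and so only the $i=0$ and $i=1$ terms of the sum contribute. This gives
\[ M \;\ge\; 1+(q-1)^0\,q^{n-d-2-g+1}A_0+(q-1)\,q^{n-d-2-g+1}A_1 \;=\; 1+q^{n-g-d-1}\bigl[1+(q-1)A_1\bigr].\]
Since $F/\F_q$ is maximal, $A_1 = N(F) = q+1+2g\sqrt{q} \ge n$ by the hypothesis $n \le q+1+2g\sqrt{q}$. Therefore
\[ M \;\ge\; q^{n-g-d-1}\bigl[1+(q-1)n\bigr].\]

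Next I would reduce the inequality $M > (q+1)^n/(q+2)^{d+g-1}$ to a clean logarithmic form. Dividing the desired inequality $q^{n-g-d-1}[1+(q-1)n] \ge (q+1)^n/(q+2)^{d+g-1}$ through by $q^{n-g-d-1}$ and rearranging, it becomes equivalent to
\[ \left(\frac{q+2}{q}\right)^{d+g-1} \;\ge\; \left(\frac{q+1}{q}\right)^{n} \cdot \frac{q^2}{1+(q-1)n}. \]
Taking logarithms, this is equivalent to
\[ (d+g-1)\ln\!\left(1+\frac{2}{q}\right) \;\ge\; n\ln\!\left(1+\frac{1}{q}\right) + \bigl[2\ln q-\ln(1+(q-1)n)\bigr]. \]

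The key observation is that the bracketed error term is non-positive under the hypothesis $n \ge q+1$: indeed $1+(q-1)n \ge 1+(q-1)(q+1) = q^2$, so $\ln(1+(q-1)n) \ge 2\ln q$. Consequently it suffices to verify
\[ (d+g-1)\ln\!\left(1+\frac{2}{q}\right) \;\ge\; n\ln\!\left(1+\frac{1}{q}\right),\]
which is precisely the hypothesis $d \ge 1-g+n\cdot\ln(1+\tfrac{1}{q})/\ln(1+\tfrac{2}{q})$.

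I do not anticipate any serious obstacle here, since the argument is a direct structural parallel of the proof of Proposition \ref{prop:4.3}. The only point requiring a little care is checking that the choice $m=n-d-2$ is admissible in Corollary \ref{cor:3.4}, i.e.\ that $2g-1 \le m \le n-d$; this amounts to $n-d \ge 2g+1$, which in turn should be implied by the hypothesized lower bound on $d$ together with $n \le q+1+2g\sqrt{q}$ (alternatively, if this admissibility were tight, one could instead use $m$ as large as possible and still retain the same leading-order bound, since $A_1 \ge n$ is all we use). Apart from this bookkeeping, the chain of inequalities goes through exactly as in the elliptic case.
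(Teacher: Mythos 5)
Your proof is correct and follows essentially the same route as the paper: the paper's Proposition~5.2 invokes Proposition~5.1 (which itself chooses $m=n-d-2$, keeps the $i=0,1$ terms of the sum, and uses $A_1=q+1+2g\sqrt{q}\ge n$) to get $M\ge 1+q^{n-g-d-1}[1+(q-1)n]$, then rearranges to the same inequality $\bigl(\tfrac{q+2}{q}\bigr)^{d+g-1}\ge \bigl(\tfrac{q+1}{q}\bigr)^{n}\cdot\tfrac{q^2}{1+(q-1)n}$. You have merely inlined Proposition~5.1 and made explicit the step (that the paper leaves as ``easy to verify'') that $n\ge q+1$ forces $1+(q-1)n\ge q^2$, so the $q^2/(1+(q-1)n)$ factor can be dropped; your side remark about admissibility of $m=n-d-2$ is a reasonable caveat that the paper also passes over silently.
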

\begin{proof}
From Proposition \ref{prop:5.1},  there exists a $(q+1)$-ary $(n,M,d)$-nonlinear code $C_F$ with cardinality
 $M=|C_F|\ge1+q^{n-g-d-1}[1+(q-1)n].$
 It is easy to verify that
$$q^{n-g-d-1}[1+(q-1)n]\ge \frac{(q+1)^n}{(q+2)^{d+g-1}}$$
if and only if
$$\frac{(q+2)^{d+g-1}}{q^{d+g-1}}\ge \frac{(q+1)^n}{q^n}\cdot \frac{q^2}{1+(q-1)n}.$$
If $d\ge 1-g+n\cdot \ln(1+\frac{1}{q})/\ln(1+\frac{2}{q})$, then we have
$$M> (q+1)^{n}/(q+2)^{d+g-1}.$$
This completes the proof.
\end{proof}

\subsection{Numerical examples}
In this subsection, we provide numerical examples from our nonlinear codes via maximal function fields and compare our nonlinear codes with other $(q+1)$-ary nonlinear codes via code alphabet extension and restriction.

\begin{ex}
{\rm Let $F/\F_q$ be the rational function field $\F_q(x)$. Its $L$-polynomial is $L(t)=1\in \mathbb{Z}[t]$.
From Lemma \ref{lem:2.2}, we have $A_i=(q^{i+1}-1)/(q-1)$ for all $i\in \mathbb{N}$.
From Theorem \ref{thm:3.3}, there exists a $(q+1)$-ary $(n,M,d)$-nonlinear code with length $n=q+1$ and size
$$M\ge  1+q^{n+1-d}> (q+1)^{n}/(q+2)^{d-1}. $$
From Propositions \ref{prop:5.1} and \ref{prop:5.2}, the size of our nonlinear codes via the rational function field is better than the one obtained from code alphabet extension and restriction of MDS codes.
In particular, if $D=0$, then the nonlinear code constructed in Theorem \ref{thm:3.3} is the same as the one given in \cite{JMX21}.
The size of such code $C$ has been determined explicitly as $|C|=q^{2s+1}+q^{2s}-2q^s+2$ and the minimum distance of $C$ is exactly $d=q+1-2s$ from \cite[Theorem III.5]{JMX21}.
Furthermore, it has been shown that $q^{2s+1}+q^{2s}-2q^s+2>(q+1)^{2s}$ for $s\le q/2$. Hence, the code $C$ is a $(q+1)$-ary $(q+1,M,d)$-nonlinear code satisfying
$n<\log_{q+1} M+d\le n+1.$
It turns out that the code $C$ is quite good at the trade-off between information rate and minimum distance.}
\end{ex}

\begin{ex}\label{ex:5.4}
{\rm Let $H/\F_{9}$ be the Hermitian function field $H=\F_9(x,y)$ defined by $y^3+y=x^4$.
From \cite[Lemma 6.4.4]{St09}, $H$ is a maximal function field of genus $g=3$ and the number of rational places of $H$ is $N(H)=28$.
Hence, the $L$-polynomial of $H/\F_{9}$ is given by $L_H(t)=(1+3t)^{6}\in \mathbb{Z}[t]$, i.e., $a_j=\binom{6}{j} \cdot 3^j$ for $0\le j\le 6$ and $a_j=0$ for $j\ge 7$.
From Lemma \ref{lem:2.2}, the number of effective divisors of $H/\F_9$ is
$A_i=\sum_{j=0}^{i}a_j (9^{i+1-j}-1)/8.$
For $2\le d\le 23$, from Theorem \ref{thm:3.3} and Corollary \ref{cor:3.4}, there exists a $10$-ary $(28,M,d)$-nonlinear code with size
$M\ge 1+\sum_{i=0}^{[(28-d-m)/2]} 8^i \cdot 9^{m-2}A_i,$
for any $5\le m \le 28-d.$}
\end{ex}
{\footnotesize

\begin{center}~\label{table:3}

Table III\\ Comparison of sizes of $10$-ary codes of length $28$ \\ \smallskip

{\rm

\begin{tabular}{|c|c|c|c|}\hline\hline

Minimum distance $d$ & Alphabet extension code size& Alphabet restriction code size & {Example \ref{ex:5.4}}\\ \hline
{$6$} & {$1.22\times 10^{19}$} &$4.67\times 10^{19}$ & ${\bf 4.85\times10^{19}}$\\   \hline

{$7$} & $1.35\times 10^{18}$& $4.24\times 10^{18}$ & ${\bf 5.39\times 10^{18}}$\\  \hline

{$8$} & $1.50\times 10^{17}$& $3.86\times 10^{17}$ & ${\bf 5.99\times 10^{17}}$\\   \hline

{$9$} & $1.67\times 10^{16}$& {$3.50\times 10^{16}$}  & ${\bf 6.66\times 10^{16}}$\\   \hline

{$10$} & {$1.85\times 10^{15}$} &$3.19\times 10^{15}$ &${\bf 7.40\times 10^{15}}$\\  \hline

{$11$} & $2.06\times 10^{14}$& $2.90\times 10^{14}$ & ${\bf 8.22\times 10^{14}}$\\  \hline

{$12$} & $2.29\times 10^{13}$& $2.63\times 10^{13}$ & ${\bf 9.13\times 10^{13}}$\\  \hline

{$13$} & $2.54\times 10^{12}$& {$2.39\times 10^{12}$}  & ${\bf 1.01\times 10^{13}}$\\  \hline

{$14$} & $2.82\times 10^{11}$& {$2.18\times 10^{11}$}  & ${\bf 1.12\times 10^{12}}$\\  \hline
{$15$} & $3.14\times 10^{10}$& {$1.98\times 10^{10}$}  & ${\bf 1.25\times 10^{11}}$\\  \hline

{$16$} & $3.49\times 10^{9}$& {$1.80\times 10^{9}$}  & ${\bf 1.39\times 10^{10}}$\\  \hline

{$17$} & $3.87\times 10^{8}$& {$1.64\times 10^{8}$}  & ${\bf 1.54\times 10^{9}}$\\  \hline

{$18$} & $4.30\times 10^{7}$& {$1.49\times 10^{7}$}  & ${\bf 1.71\times 10^{8}}$\\  \hline

{$19$} & $4.78\times 10^{6}$& {$1.35\times 10^{6}$}  & ${\bf 1.91\times 10^{7}}$\\  \hline

{$20$} & $5.31\times 10^{5}$& {$1.23\times 10^{5}$}  &$ {\bf 2.12\times 10^{6}}$\\  \hline

{$21$} & $59,049$& {$11,168$}  & ${\bf 235,882}$\\  \hline

{$22$} & $6,561$& {$1,016$}  & ${\bf 26,210}$\\  \hline \hline

\end{tabular}

}

\end{center}

}
Note that for those comparison, we are lack of the parameters of $11$-ary codes from the online table \cite{G22}  for code alphabet restriction, here we use $11$-ary $[28,26-d,d]$-algebraic geometry codes.


\begin{thebibliography}{10}



\bibitem{BTV17} A. Barg, I. Tamo, and S. Vl\u{a}du\c{t}, {\it Locally recoverable codes on algebraic curves}, IEEE Trans. Inf. Theory, vol. 63, no. 8, pp. 4928--4939, Aug. 2017.

\bibitem{BMXY13} A. Bassa, L. Ma, C. Xing and S. Yeo, {Towards a characterization of subfields of the Deligne--Lusztig function fields}, J. Combin. Theory, Series A, vol. 120, pp. 1351--1371, Sep. 2013.

\bibitem{BM18} P. Beelen and M. Montanucci, {\it A new family of maximal curves}, J. London Math. Soc., vol. 98, no. 2, pp. 573--592, 2018.

\bibitem{GSX00} A. Garcia, H. Stichtenoth and C. Xing, {\it On subfields of the Hermitian function fields}, Compos. Math., vol. 120, pp. 137--170, 2000.

\bibitem{GK09} M. Giulietti and G. Korchmaros, {\it A new family of maximal curves over a finite field}, Math. Ann., vol. 343, pp. 229--245, 2009.

\bibitem{G22} M. Grassl, {\it Bounds on the minimum distance of linear codes and quantum codes}, Online available at http://www.codetables.de., accessed on 2022-08-13.

\bibitem{GH05} T. Gulliver and M. Harada, {\it Extremal self-dual codes over $\Z_6$, $\Z_8$ and $\Z_{10}$}, AKCE J. Graphs. Combin., vol. 2, no. 1, pp. 11--24, 2005.

\bibitem{H00} M. Harada, {\it On the existence of extremal Type II codes over $\Z_6$}, Discrete Math., vol. 223, no. 1--3, pp. 373-378, Aug. 2000.

\bibitem{HM11} M. Harada, and T. Miezaki, {\it An upper bound on the minimum weight of Type II $\Z_{2k}$-codes}, J. Combin. Theory, Series A, vol. 118, no. 1, pp. 190--196, Jan. 2011.

\bibitem{HKT08} J.W.P. Hirschfeld, G. Korchmaros and F. Torres, {\it Algebraic Curves over a Finite Field}, Princeton Series in Applied Mathematics, Princeton University Press, 2008.



\bibitem{JMX20} L. Jin, L. Ma and C. Xing, {\it Construction of optimal locally repairable codes via automorphism groups of rational function fields}, IEEE Trans. Inf. Theory, vol. 66, no. 1, pp. 210--221, Jan. 2020.

\bibitem{JMX21} L. Jin, L. Ma and C. Xing, {\it A new construction of nonlinear codes via rational function fields}, IEEE Trans. Inf. Theory, vol. 67, no. 2, pp. 770--777, Feb. 2021.


\bibitem{LMX19} X. Li, L. Ma and C. Xing, {\it Optimal locally repairable codes via elliptic curves}, IEEE Trans. Inf. Theory, vol. 65, no. 1, pp. 108--117, Jan. 2019.

\bibitem{LX04} S. Ling and C. Xing, {\it Coding Theory: A First Course}, Cambridge University Press, Cambridge, 2004.


\bibitem{M93} C. Munuera, {\it On MDS elliptic curves}, Discrete Math., vol. 117, pp. 279--286, 1993.

\bibitem{MX19} L. Ma and C. Xing, {\it On subfields of the Hermitian function field involving the involution automorphism}, J. Number Theory, vol. 198, pp. 293--317, May 2019.

\bibitem{MX20} L. Ma and C. Xing, {\it Constructive asymptotic bounds of locally repairable codes via function fields},  IEEE Trans. Inf. Theory, vol. 66, no. 9, pp. 5395--5403, Sep. 2020.

\bibitem{MX21} L. Ma and C. Xing, {\it The group structures of automorphism groups of elliptic function fields over finite fields and optimal locally repairable codes}, arXiv:2008.12119.

\bibitem{NX97} H. Niederreiter and C. Xing, {\it Cyclotomic function fields, Hilbert class fields, and global function fields with many rational places}, Acta Arith., pp. 59--38, 1997.


\bibitem{St09}  H. Stichtenoth, {\it Algebraic Function Fields and Codes} (Graduate Texts in Mathematics), vol. 254, Berlin, Germany: Springer Verlag, 2009.

\bibitem{SX05} H. Stichtenoth and C. Xing, {\it Excellent nonlinear codes from algebraic function fields}, IEEE Trans. Inf. Theory, vol. 51, no. 11, pp. 4044--4046, Nov. 2005.


\bibitem{W97} J. Walker, {\it The Nordstrom-Robinson code is algebraic-geometric}, IEEE Trans. Inf. Theory, vol. 43, no. 5, pp. 1588-1593, Sep. 1977.




\bibitem{X11} C. Xing, {\it Asymptotically good nonlinear codes from algebraic curves}, IEEE Trans. Inf. Theory, vol. 57, no. 9, pp. 5991--5995, Sep. 2011.



\end{thebibliography}
\end{document}